\definecolor{myYellow}{rgb}{0.9,0.9,0}
\newcommand{\flag}{\textnormal{\textsl{flag}}}
\newcommand{\mynew}{\textnormal{\textsl{new}}}
\newcommand{\next}{\textnormal{\textsl{next}}}
\newcommand{\algofont}[1]{\textnormal{\selectfont\sffamily#1}}
\newcommand{\algmain}{\algofont{TtoG}}
\newcommand{\algmaini}{\algofont{TtoGImp}}
\newcommand{\alggreedypairs}{\algofont{GreedyPairs}}
\newcommand{\algpairncr}{\algofont{PairCompNCr}}
\newcommand{\algpaircr}{\algofont{PairComp}}
\newcommand{\algpop}{\algofont{Pop}}
\newcommand{\algblocks}{\algofont{BlockCompNCr}}
\newcommand{\algblocksc}{\algofont{BlockComp}}
\newcommand{\algremblocks}{\algofont{RemCrBlocks}}
\newcommand{\RePair}{\algofont{RePair}}
\newcommand{\Mref}[1]{(\hyperref[M#1]{M#1})}
\newcommand{\Mrefall}{{\Mref{1}--\Mref{3}}}
\newcommand{\CPref}[1]{(\hyperref[CP#1]{CP#1})}
\newcommand{\PC}{{PC}}
\newcommand{\twodots}{\mathinner{\ldotp\ldotp}}
\DeclareMathOperator{\eval}{val}
\DeclareMathOperator{\weight}{w}
\newcommand{\algradix}{\algofont{RadixSort}}
\newcommand{\mycount}{\textnormal{\textsl{count}}}
\newcommand{\leftl}[1][(a)]{\ensuremath{\textnormal{\textsl{left}}#1}}
\newcommand{\mysize}{\ensuremath{\textnormal{\textsl{size}}}}
\newcommand{\rightl}[1][(a)]{\ensuremath{\textnormal{\textsl{right}}#1}}
\newcommand{\makeset}[2]{\ensuremath{ \{ #1 \: | \: #2 \} }}
\newcommand{\mytext}{\ensuremath{T}}
\newcommand{\textci}{\ensuremath{T'}}
\providecommand{\Ocomp}{\mathcal{O}}
\providecommand{\size}{\ensuremath{\Ocomp(|\mytext|)}}
\newtheorem{theorem}{Theorem}
\newtheorem{lemma}{Lemma}
\newtheorem{corollary}{Corollary}
\theoremstyle{definition}
\theoremstyle{remark}
\newtheorem{clm}{Claim}
\begin{document}

\title{Approximation of grammar-based compression via recompression}

\author[A.\ Je\.z]{Artur Je\.z}
\thanks{Supported by NCN grant number 2011/01/D/ST6/07164, 2011--2014.}
\address{
Max Planck Institute f\"ur Informatik,\\
Campus E1 4,  DE-66123 Saarbr\"ucken, Germany\\
\and
Institute of Computer Science, University of Wroc{\l}aw \\
ul.\ Joliot-Curie~15, 50-383 Wroc{\l}aw, Poland\\
\texttt{aje@cs.uni.wroc.pl}}

\begin{abstract}
In this paper we present a simple linear-time algorithm
constructing a context-free grammar of size $\Ocomp(g \log (N/g))$
for the input string, where $N$ is the size of the input string
and $g$ the size of the optimal grammar generating this string.
The algorithm works for arbitrary size alphabets,
but the running time is linear assuming that the alphabet $\Sigma$ of the input string
can be identified with numbers from $\{1,\ldots , N^c \}$ for some constant $c$.
Otherwise, additional cost of $\Ocomp(n \log |\Sigma|)$ is needed.

Algorithms with such an approximation guarantee and running time are known,
the novelty of this paper is a particular simplicity of the algorithm
as well as the analysis of the algorithm, which uses a general
technique of recompression recently introduced by the author.
Furthermore, contrary to the previous results, this work does not use the LZ representation
of the input string in the construction, nor in the analysis.
\end{abstract}
\keywords{Grammar-based compression; Construction of the smallest grammar; SLP; compression}
	\maketitle

\section{Introduction}
	\label{sec:intro}

\subsection{Grammar based compression}

In the grammar-based compression text is represented by a context-free grammar (CFG) generating exactly one string.
The idea behind this approach is that a CFG can compactly represent the structure of the text,
even if this structure is not apparent.
Furthermore, the natural hierarchical definition of the context-free grammars make such a representation suitable for algorithms,
in which case the string operations can be performed on the compressed representation,
without the need of the explicit decompression~\cite{GawryLZ,FCPM,SLPpierwszePM,PlandowskiSLPequivalence,RytterSWAT,SLPaprox2}.
Lastly, there is a close connection between block-based compression methods and the grammar compression:
it is fairly easy to rewrite the LZW definition as a $\Ocomp(1)$ larger CFG,
LZ77 can also be presented in this way, introducing a polynomial blow-up in size
(reducing the blow up to $\log(N/\ell)$, where $\ell$ is the size of the LZ77 representation,
is non-trivial~\cite{SLPaprox,SLPaprox2}).

While grammar-based compression was introduced with practical
purposes in mind and the paradigm was used in several implementations~\cite{RePair,KiefferY96,Sequitur},
it also turned out to be very useful in more theoretical considerations.
Intuitively, in many cases large data have relatively simple inductive definition,
which results in a grammar representation of small size. 
On the other hand, it was already mentioned that the hierarchical structure of the
CFGs allows operations directly on the compressed representation.
A recent survey by Lohrey\cite{Lohreysurvey} gives a comprehensive description
of several areas of theoretical computer science in which grammar-based compression was successfully applied.

The main drawback of the grammar-based compression is that producing the smallest CFG for a text is \emph{intractable}:
given a string $w$ and number $k$ it is {\selectfont\sffamily NP}-hard to decide whether there exist a CFG of size $k$ that generates $w$~\cite{SLPapproxNPhard}.
Furthermore, the size of the grammar cannot be approximated within some small constant factor~\cite{SLPaprox2}.

Lastly, it is worth noting that in an extremely simple cases of texts of the form $a^{\ell_1}ba^{\ell_2}b \cdots b a^{\ell_k}$
construction of the grammar generating this string is equivalent (up to a small constant factor) to a construction of an \emph{addition chain}
for the sequence $\ell_1 < \ell_2 <  \ldots < \ell_k$
and for the latter problem the best algorithm returns an addition chain of size
$\log \ell_k + \Ocomp\Big(\sum_{i=1}^k \frac{\log \ell_i}{\log \log \ell_i}\Big)$~\cite{Yaopowers},
which in particular yields an $\Ocomp\Big( \frac{\log n}{\log \log n} \Big)$ approximation of the size of the smallest addition chain.
Since the addition chains are well studied, showing a construction of an addition chains shorter than
$\log \ell_k + \Ocomp\Big(\sum_{i=1}^k \frac{\log \ell_i}{\log \log \ell_i}\Big)$ seems unlikely.
Still, this construction was not aimed at \emph{approximating} the shortest addition chain,
it is still possible that $\Ocomp(\frac{\log n}{\log \log n})$ approximation can be improved.
In any case, any new result for addition chains would be interesting on its own.

\subsection{Approximation}
The hardness of the smallest grammar problem naturally leads to two directions of research: on one hand, several heuristics are considered~%
\cite{RePair,KiefferY96,Sequitur},
on the other, approximation algorithms, with a guaranteed approximation ratio, are proposed;
in this paper we consider only the latter.

The first two algorithms with an approximation ratio $\Ocomp(\log(N/g))$ were developed simultaneously
by Rytter~\cite{SLPaprox} and Charikar et~al.~\cite{SLPaprox2}.
They followed a similar approach, we first present Rytter's approach as it is a bit easier to explain.

Rytter's algorithm~\cite{SLPaprox} applies the LZ77 compression to the input string and then transforms the
obtained LZ77 representation to a $\Ocomp(\ell \log(N/\ell))$ size grammar,
where $\ell$ is the size of the LZ77 representation.
It is easy to show that $\ell \leq g$ and as $f(x) = x \log(N/x)$ is increasing,
the bound $\Ocomp(g\log(N/g))$ on the size of the grammar follows (and so a bound $\Ocomp(\log(N/g))$ on approximation ratio).
The crucial part of the construction is the requirement that the intermediate constructed grammar
defines a derivation tree satisfying the AVL condition.
The bound on the running time and the approximation guarantee are all consequences of the balanced form of the derivation tree
and of the known algorithms for merging, splitting, etc.\ of AVL trees
(in fact these procedures are much simpler in this case, as we do not store any information in the internal nodes~\cite{SLPaprox}).
Note that also the final grammar for the input text is balanced, which makes is suitable for later processing.
Since the construction of LZ77 representation can be performed in linear time
(assuming that the letters of the input word can be sorted in linear time),
also the running time of the whole algorithm can be easily bounded by a linear function.

Charikar et~al.~\cite{SLPaprox2} followed more or less the same path,
with a different condition imposed on the grammar: it was required that its derivation tree is length-balanced,
i.e.\ for a rule $X \to YZ$ the lengths of words generated by $Y$ and $Z$ are within a certain 
multiplicative constant factor from each other.
For such trees efficient implementation of merging, splitting etc.\ operations were given (i.e.\ constructed from scratch)
by the authors and so the same running time as in the case of the AVL trees was obtained.

Lastly, Sakamoto~\cite{SLPaproxSakamoto} proposed a different algorithm, based on \RePair~\cite{RePair},
which is one of the practically implemented and used algorithms for grammar-based compression.
His algorithm iteratively replaced pairs of different letters and maximal blocks of letters
($a^\ell$ is a \emph{maximal block} if that cannot be extended by $a$ to either side).
A special pairing of the letters was devised, so that it is `synchronising':
if $w$ has $2$ disjoint occurrences in text, then those two occurrences can be represented
as $w_1w'w_2$, where $w_1,w_2 = \Ocomp(1)$, such that both occurrences of $w'$ in text
are paired and compressed in the same way.
The analysis was based on considering the LZ77 representation of the text and proving that due to
`synchronisation' the factors of LZ77 are compressed very similarly as the text to which they refer.

However, to the author's best knowledge and understanding, the presented analysis~\cite{SLPaproxSakamoto}
is incomplete, as the cost of nonterminals introduced when maximal blocks are replaced
is not bounded at all in the paper, see the appendix; the bound that the author was able to obtain using there presented approach
is $\Ocomp(\log(N/g)^2)$, so worse than claimed.

\subsection{Proposed approach: recompression}
In this paper another algorithm is proposed,
it is constructed using the general approach of \emph{recompression}, developed by the author.
In essence, we iteratively apply two replacement schemes to the text $T$:
\begin{description}
	\item[pair compression of $ab$]
	For two different symbols (i.e.\ letters or nonterminals) $a$, $b$ such that substring $ab$ occurs in \mytext{}
	replace each of $ab$ in \mytext{} by a fresh nonterminal $c$.
	\item[$a$'s block compression]
	For each maximal block $a^\ell$, where $a$ is a letter or a nonterminal and $\ell >1$,
	that occurs in \mytext, replace all $a^\ell$s in \mytext{}
	by a fresh nonterminal $a_\ell$.
\end{description}
Then the returned grammar is obtained by backtracking the compression operations performed by the algorithm:
observe that replacing $ab$ with $c$ corresponds to a grammar production
\begin{subequations}
\label{eq: productions}
\begin{equation}
\label{eq: c to ab}
c \to ab
\end{equation}
and similarly
replacing $a^\ell$ with $a_\ell$ corresponds to a grammar production
\begin{equation}
\label{eq: al to al}
a_\ell \to a^\ell 
\enspace .
\end{equation}
\end{subequations}

The algorithm is divided into \emph{phases}: in the beginning of a phase,
all pairs occurring in the current text are listed and stored in a list $P$, similarly,
$L$ contains all letters occurring in the current text.
Then pair compression is applied to an appropriately chosen subset of $P$
and all blocks of symbols from $L$ are compressed, then the phase ends.
In everything works perfectly, each symbol of \mytext{} is replaced and so \mytext's length drops by half;
in reality the text length drops by some smaller, but constant, factor per phase.
For the sake of simplicity, we treat all nonterminals introduced by the algorithm as letters.

In author's previous work it was shown that such an approach can be efficiently applied
to text represented in a grammar compressed form.
In this way new results for compressed membership problem~\cite{fullyNFA},
fully compressed pattern matching~\cite{FCPM} and word equations~\cite{wordequations,onevarlinear} were obtained.
In this paper a somehow opposite direction is followed:
the recompression method is employed to the input string.
This yields a simple linear-time algorithm:
Performing one phase in $\size$ running time is relatively easy,
since the length of \mytext{} drops by a constant factor in each phase,
the $\Ocomp(N)$ running time is obtained.

However, the more interesting is the analysis, and not the algorithm itself:
it is performed by applying (as a mental experiment) the recompression to the optimal grammar $G$ for the input text.
In this way, the current $G$ always generates the current string kept by the algorithm
and the number of nonterminals introduced during the construction can be calculated in terms of $|G| \leq g$.

A relatively straightforward analysis yields that the generated grammar is of size $\Ocomp(g \log N)$,
a slightly more involved algorithm that combines the recompression technique with a naive approach that generates
a grammar of size $\Ocomp(N)$ yields a grammar of size $\Ocomp(g\log(N/g) + g)$.

\subsection{Advantages and disadvantages of the proposed technique}
We believe that the proposed algorithm is interesting, as it is very simple
and its analysis for the first time does not rely on LZ77 representation of the string.
Potentially this can help in both design of an algorithm with a better approximation ratio and in showing a logarithmic lower bound:
Observe that LZ77 representation is known to be at most as large as the smallest grammar,
so it might be that some algorithm produces a grammar of size $o(g \log (N/g))$,
even though this is of size $\Omega(\ell \log(N/\ell))$, where $\ell$ is the size of the LZ77 representation of the string.
Secondly, as the analysis `considers' the optimal grammar, it may be much easier to observe,
where any approximation algorithm performs badly, and so try to approach a logarithmic lower bound.
This is much harder to imagine, when the approximation analysis is done in terms of the LZ77.

Unfortunately, the obtained grammar is not balanced in any sense, in fact it is easy to give examples
on which it returns grammar of height $\Omega(\sqrt N)$ (note though that the same applies also to grammar
returned by Sakamoto's algorithm). This makes the obtained grammar less suitable for later processing;
on the other hand, the practically used grammar-based compressors~\cite{RePair,KiefferY96,Sequitur}
also do not produce a balanced grammar, nor do they give a guarantee on its height.

On the good side, there is no reason why the optimal grammar should be balanced,
neither can we expect that for an unbalanced grammar a small balanced one exists.
Thus it is possible that while $o(\log(N/g))$ approximation algorithm exists,
there is no such algorithm that always returns a balanced grammar.

We note that the reason why the grammar returned by proposed algorithm can have large height
is only due to block compression: if we assume that the nonterminal generating $a^\ell$ has height one,
the whole grammar has height $\Ocomp(\log N)$.
It looks reasonable to assume that many data structures for grammar representation of text as well as later processing of it can
indeed process a production $a_\ell \to a^\ell$ in constant time.

Lastly, the proposed method seems to much easier to generalise then the LZ77-based ones:
generalisations of SLPs to grammars generating other objects (mostly: trees)
are known but it seems that LZ77-based approach does not generalise to such setting,
as LZ77 ignores any additional structure (like: tree-structure) of the data.
In recent work of Lohrey and the author the algorithm presented in this paper
is generalised to the case of tree-grammars, yielding a first provable approximation
for the smallest tree grammar problem~\cite{treegrammar}.

\subsubsection*{Comparison with Sakamoto's algorithm}
The general approach is similar to Sakamoto's method,
however, the pairing of letters seems more natural in here presented paper.
Also, the construction of nonterminals for blocks of letters is different,
the author failed to show that the bound actually holds for the variant proposed by Sakamoto.
It should be noted that the analysis presented in this paper for the calculation of nonterminals
used due to pair compression is fairly easy, while estimating the number used for block compression
is much more involved. Also, the connection to the addition chains suggests that the compression of blocks
is the difficult part of the smallest grammar problem.

\subsubsection*{Note on computational model}
The presented algorithm runs in linear time, assuming that the $\Sigma$ can be identified with a continues
subset of natural numbers of size $\Ocomp(N^c)$ for some constant $c$ and the \algradix{} can be performed on it.
Should this not be the case for the input, we can replace the original letters with such a subset,
in $\Ocomp(n \log |\Sigma|)$ time (by creating a balanced tree for letters occurring in the input string).
Note that the same comment applies to previous algorithms: there are many different algorithms for constructing
the LZ77 representation of the text, but all of them first compute a suffix array (or a suffix tree) of the text,
and linear-time algorithms for that are based on linear-time sorting of letters (treated as integers);
although Sakamoto's method was designed to work with constant-size alphabet,
it can be easily extended to the case when $\Sigma$ can be identified with a sequence of $\Ocomp(N^c)$ numbers,
retaining the linear running-time.

\section{The algorithm}
The input sequence to be represented by a context-free grammar is $T \in \Sigma^*$ and $N$ denotes its initial length.
The algorithm \algmain{} introduces new symbols to the instance, which are the nonterminals of the constructed grammar.
However, these are later treated exactly as the original letters, so we insist on calling them letters as well
and use common set $\Sigma$ for both letters and nonterminals.
We assume that $T$ is represented as a doubly-linked list, so that removal and replacement of its elements can be performed
in constant time (assuming that we have a link to such an occurrence).
Note though that if we were to store $T$ in a table, the running time would be the same.

The smallest grammar generating $T$ is denoted by $G$ and its size $|G|$, measured as the length of the productions, is $g$.
The crucial part of the analysis is the modification of $G$ according to the compression performed on $T$.
The terms nonterminal, rules, etc.\ always address the optimal grammar $G$ (or its transformed version).
To avoid confusion, we do not use terms `production' and `nonterminal' for $a$ that replaced some substring in \mytext{}
(even though this is formally a nonterminal of the constructed grammar).
Still, when a new `letter' $a$ is introduced to \mytext{}
we need to estimate the length of the `productions' in the constructed grammar that are needed for $a$
(note that we can of course use all letters previously used in \mytext).
We refer to this length of productions as \emph{cost of representation of a letter} $a$.
For example, in production~\eqref{eq: c to ab} then the representation cost is $2$
(as we have only one rule $c \to ab$, this rule is called a \emph{representation} of $c$)
and in a rule~\eqref{eq: al to al} we have a cost $\ell$;
the latter cost can be significantly reduced, for instance for $a^12$ we can have a representation cost of $8$ instead of $12$,
when we use a subgrammar $a_2 \to aa$, $a_3 \to a_2a$, $a_6 \to a_3 a_3$ and $a_12 \to a_6a_6$.
%
%
%
%
Note that when $c$ replaces a pair (as in~\eqref{eq: c to ab}), its representation cost is always $2$,
but when $a$ replaces a block of letters, say $a^\ell$, the cost might be larger than constant.
In the latter case our algorithm constructs a special subgrammar for $a_\ell$ that generates $a^\ell$.
Details are explained later on.

\begin{algorithm}[H]
	\caption{\algmain: outline}
	\label{alg:main}
	\begin{algorithmic}[1]
	\While{$|\mytext|>1$} \label{alg:mainloop}
		\State $L \gets $ list of letters in \mytext{} \label{listing letters}
    \For{each $a \in L$} \Comment{Blocks compression}
    		\State compress maximal blocks of $a$ \label{outer occurrences} \Comment{\size}
    \EndFor				    
		\State $P \gets $ list of pairs \label{listing pairs}
		\State find partition of $\Sigma$ into $\Sigma_\ell$ and $\Sigma_r$ \label{partition letters}
			\Comment{Covering at least $1/2$ of occurrences of letters in \mytext}
			\\ \Comment{\size, see Lemma~\ref{lem:finding partition}}
    \For{$ab\in P \cap \Sigma_\ell\Sigma_r$} \Comment{These pairs do not overlap}
    	\State compress pair $ab$ \label{pair compression} \Comment{Pair compression}
    \EndFor
	\EndWhile
	\State \Return the constructed grammar
	 \end{algorithmic}
\end{algorithm}

We call one iteration of the main loop of \algmain{} a \emph{phase}.

Before we make any analysis, we note that at the beginning of each phase
we can make a linear-time preprocessing that guarantees that the letters in \mytext{}
form an interval of numbers (which makes them more suitable for sorting using \algradix).

\begin{lemma}
\label{lem:letters are consecutive}
At the beginning of the phase, in time \size{} we can rename the letters used in \mytext{}
so that they form an interval of numbers.
\end{lemma}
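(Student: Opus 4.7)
The plan is to use the global assumption that the current letters can be identified with integers in $\{1,\ldots,N^c\}$, together with a single application of \algradix, to relabel all letters appearing in $T$ to the interval $\{1,\ldots,k\}$, where $k$ is the number of distinct letters currently in $T$. The obstacle is that we cannot afford anything proportional to the size of the underlying integer range: direct table lookup would cost $\Ocomp(N^c)$, which is far too much. So we must work only with the $\Ocomp(|T|)$ positions that actually occur.

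Concretely, I would first scan $T$ once and produce the list of pairs $(T[i],i)$ for $i=1,\ldots,|T|$; this takes $\size$ time. Then I would apply \algradix{} to this list, sorted by the first coordinate; since the keys are integers from $\{1,\ldots,N^c\}$, radix sort runs in $\size$ time. In the sorted list, equal letters occupy consecutive blocks, so a single left-to-right sweep identifies the distinct letters and assigns them new names $1,2,\ldots,k$ in order of first appearance in the sorted list. Finally, walking through the sorted list once more, each entry $(T[i],i)$ tells us which position $i$ of $T$ to overwrite with the new name of that letter; this last pass is again $\size$.

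Correctness is immediate: every letter that appears in $T$ receives a unique name in $\{1,\ldots,k\}$, and every occurrence in $T$ is updated to that name, so the resulting alphabet is exactly $\{1,\ldots,k\}$, which is an interval. The bookkeeping of the old-to-new correspondence (needed so that subsequent compression steps and the constructed grammar can still refer to the original symbols) is read off directly from the sorted list in the same sweep, at no extra asymptotic cost. Summing the three passes yields the claimed $\size$ bound.
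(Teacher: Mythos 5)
There is a genuine gap in the running-time claim. Your whole argument hinges on the assertion that \algradix{} on the $|\mytext|$ pairs with keys from $\{1,\ldots,N^c\}$ runs in \size{} time, but radix sort over a key range $U$ with base $b$ costs $\Ocomp\bigl(\log_b U\cdot(|\mytext|+b)\bigr)$: to make a single pass linear in $|\mytext|$ you must take $b=\Ocomp(|\mytext|)$, and then the number of passes is $\log N^c/\log|\mytext|$, which is not $\Ocomp(1)$ once $|\mytext|$ is much smaller than $N$ (and the lemma must be applied at the beginning of \emph{every} phase, where $|\mytext|$ shrinks geometrically, e.g.\ down to $\Theta(\log N)$). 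Taking instead base $\Theta(N)$ gives $\Ocomp(N)$ per phase, and summed over the $\Theta(\log N)$ phases this is $\Ocomp(N\log N)$, which destroys the linear running time of \algmain{} that the lemma is supposed to support. So your construction is only valid in the first phase, where $|\mytext|=N$ and $\Ocomp(cN)$ is indeed \size.

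The missing idea is precisely the invariant that the lemma is designed to maintain across phases: if at the start of a phase the letters of \mytext{} form an interval $[m\twodots m+k]$ (so $k<|\mytext|$), and every new letter created during the phase is assigned the next consecutive value, then at the end of the phase all letters lie in an interval $[m\twodots m+k']$ with $k'=\Ocomp(|\mytext|)$. Over such a small interval the ``direct table lookup'' you dismissed is affordable: one keeps arrays $\flag[\cdot]$ and $\mynew[\cdot]$ indexed by the $k'+1$ values of the interval and performs a single scan of \mytext, assigning fresh consecutive names on first occurrence and rewriting every occurrence — \size{} time, no sorting at all. The full-range \algradix{} over $\{1,\ldots,N^c\}$ is needed only once, before the first phase, to establish the invariant for the original input alphabet. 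Without this inductive set-up (or some substitute, such as sorting by hashing, which would sacrifice determinism), the per-phase \size{} bound you claim does not hold.
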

\begin{proof}
Observe that we assumed that the input alphabet consists of letters that can be identified with subset of $\{1, \ldots, N^c\}$,
see the discussion in the introduction.
Treating them as vectors of length $c$ over $\{0, \ldots, N - 1\}$ we can sort them using \algradix{}
in $\Ocomp(cN)$ time, i.e.\ linear one. Then we can re-number those letters to $1$, $2$, \ldots, $n$ for some $n \leq N$.

Suppose that at the beginning of the phase the letters form an interval $[m,\twodots, m + k]$.
Each new letter, introduced in place of a compressed subpattern (i.e.\ a block $a^\ell$ or a pair $ab$),
is assigned a consecutive value, and so after the phase the letters appearing in  \mytext{} are within an interval $[m\twodots m + k']$
for some $k' > k$. It is now left to re-number the letters from $[m\twodots m + k']$,
so that the ones appearing in \mytext{} indeed form an interval.
For each symbol $a$ in the interval $[m\twodots m + k']$ we set a flag to $\flag[a] = 0$. Moreover, we set a variable $\next$ to $m+k'+1$. 
Then we read \mytext.
Whenever we spot a letter $a \in [m\twodots m + k']$ with $\flag[a] = 0$, we set
$\flag[a] := 1$; $\mynew[a] := \next$, and $\next := \next+1$.
Moreover, we replace this $a$ by $\mynew[a]$.
When we spot a symbol $a \in [m\twodots m + k']$ with $\flag[a] = 1$, then
we replace this $a$ by $\mynew[a]$.
Clearly the running time is \size{} and after the algorithm the symbols form a subinterval of $[m+k'+1\twodots m+2k'+1]$.
\qedhere
\end{proof}

\subsection{Blocks compression}
The blocks compression is very simple to implement:
We read \mytext, for a maximal block of $a$s of length greater than $1$
we create a record $(a,\ell,p)$, where $\ell$ is a length of the block,
and $p$ is the pointer to the first letter in this block.
We then sort these records lexicographically using \algradix{} (ignoring the last component).
There are only $\size$ records and we assume that $\Sigma$ can be identified with an interval,
see Lemma~\ref{lem:letters are consecutive}, this is all done in \size.
Now, for a fixed letter $a$, the consecutive tuples with the first coordinate $a$
correspond to all blocks of $a$, ordered by the size.
It is easy to replace them in \size{} time with new letters.

Note that so far we did not care with the cost of representation of new letters that replaced $a$-blocks.
We use a particular schema to represent $a_{\ell_1}, a_{\ell_2}, \ldots, a_{\ell_k}$, which shall have a representation cost
$\Ocomp(\sum_{i=1}^k[1+ \log(\ell_{i} - \ell_{i-1})])$ (take $\ell _0 = 0$ for convenience).

\begin{lemma}
	\label{lem: cost of powers}
Given a list $1 < \ell_1 <  \ell_2 < \dots < \ell_k$ we can represent letters $a_{\ell_1}, a_{\ell_2}, \ldots, a_{\ell_k}$
that replace blocks $a^{\ell_1}, a^{\ell_2}, \ldots, a^{\ell_k}$ with a cost
$ \Ocomp(\sum_{i=1}^k[1 + \log(\ell_{i} - \ell_{i-1})])$, where $\ell_0 = 0$.
\end{lemma}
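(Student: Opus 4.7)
The plan is to build the subgrammar in two parts: a shared ``doubling chain'' of auxiliary nonterminals generating $a^{2^j}$ for $j = 0, 1, \ldots, \lfloor \log_2 \ell_k \rfloor$, followed by an incremental rule for each $a_{\ell_i}$ that combines $a_{\ell_{i-1}}$ with a binary decomposition of the increment $d_i := \ell_i - \ell_{i-1}$.

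First, let $m = \lfloor \log_2 \ell_k \rfloor$. I would introduce fresh letters $b_0, b_1, \ldots, b_m$, with rules $b_0 \to a$ and $b_{j} \to b_{j-1} b_{j-1}$ for $1 \leq j \leq m$. Then $b_j$ generates $a^{2^j}$ and the total length of these $m+1$ rules is $\Ocomp(\log \ell_k)$. Next, for each $i$, expand $d_i$ in binary as $d_i = \sum_{j \in S_i} 2^j$ with $|S_i| \leq 1 + \log_2 d_i$, and introduce the rule $a_{\ell_i} \to a_{\ell_{i-1}} \prod_{j \in S_i} b_j$ (with $a_{\ell_{i-1}}$ omitted when $i = 1$, since $\ell_0 = 0$). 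By induction on $i$ the new letter $a_{\ell_i}$ indeed generates $a^{\ell_i}$, and the length of this rule is $\Ocomp(1 + \log d_i)$.

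Summing, the total representation cost so far is $\Ocomp(\log \ell_k) + \sum_{i=1}^k \Ocomp(1 + \log d_i)$. The final step is to absorb the $\Ocomp(\log \ell_k)$ term into the target bound: since $\ell_k = \sum_i d_i \leq k \cdot \max_i d_i$ and each $d_i \geq 1$, we have $\log \ell_k \leq \log k + \log \max_i d_i \leq k + \sum_{i=1}^k \log d_i$, which is dominated by $\sum_{i=1}^k (1 + \log d_i)$. This yields the claimed bound.

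The bulk of the construction is essentially standard square-and-multiply exponentiation; the only subtlety is to ensure that the cost of the shared doubling chain is not charged separately in the final count, which is precisely what the inequality $\log \ell_k \leq \log k + \sum_i \log d_i$ accomplishes. One minor bookkeeping point is the case $i = 1$, where we must use just the binary decomposition of $\ell_1$ itself (no prefix $a_{\ell_{i-1}}$), but this fits the same estimate since $\ell_1 > 1$ forces $\log d_1 \geq 1$.
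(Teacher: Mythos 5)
Your proof is correct and takes essentially the same approach as the paper: a doubling chain of auxiliary letters generating $a^{2^j}$, binary expansions of the consecutive differences $d_i = \ell_i - \ell_{i-1}$, and chaining each $a_{\ell_i}$ onto $a_{\ell_{i-1}}$. The only (harmless) deviation is that the paper builds the doubling chain only up to $\max_i d_i$ rather than $\ell_k$, so its $\Ocomp(\log \max_i d_i)$ cost is absorbed trivially and your extra inequality $\log \ell_k \leq \log k + \log \max_i d_i$ is not needed there.
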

\begin{proof}
Firstly observe that without loss of generality we may assume that
the list $\ell_1, \ell_2, \dots, \ell_k$ is given to us in a sorted way,
as it can be easily obtained from the sorted list of occurrences of blocks.
For simplicity define $\ell_0 = 0$ and let $\ell = \max_{i=1}^k (\ell_{i} - \ell_{i-1})$.

In the following, we shall define rules for certain new letters $a_m$, each of them `derives' $a^m$
(in other words, $a_m$ represents $a^m$).
For each $1 \leq i \leq \log \ell$ introduce a new letter $a_{2^i}$, defined as $a_{2^i} \to a_{2^{i-1}}a_{2^{i-1}}$,
where $a_1$ simply denotes $a$. Clearly $a_{2^i}$ represents $a^{2^i}$ and the representation
cost summed over all $i\leq \ell$ is $\Ocomp(\log \ell)$.

Now introduce new letters $a_{\ell_i - \ell_{i-1}}$ for each $i > 0$,
which shall represent $a^{\ell_i - \ell_{i-1}}$.
They are represented using the binary expansion, i.e.\ by concatenation of at most $1 + \log(\ell_i - \ell_{i-1})$
from among the letters $a_1, a_2, a_4, \ldots, 2^{ \lfloor \log(\ell_i - \ell_{i-1}) \rfloor}$.
This has a representation cost $\Ocomp(\sum_{i=1}^k [1 + \log(\ell_{i} - \ell_{i-1})])$.

Lastly, each $a_{\ell_i}$ is  represented as $a_{\ell_{i}} \to a_{\ell_{i} -\ell_{i-1}} a_{\ell_{i-1}}$,
which has a total representation cost $\Ocomp(k)$.

Summing up $\Ocomp(\log \ell)$, $\Ocomp(\sum_{i=1}^k [1 + \log(\ell_{i} - \ell_{i-1})])$ and $\Ocomp(k)$
we obtain $\Ocomp(\sum_{i=1}^k [1 + \log(\ell_{i} - \ell_{i-1})])$, as claimed.
\end{proof}

In the following we shall also use a simple property of the block compression:
since no two maximal blocks of the same letter can be next to each other,
after the block compression there are no blocks of length greater than $1$ in \mytext.

\begin{lemma}
\label{lem:consecutive letters are different}
In line~\ref{listing pairs} there are no two consecutive letters $aa$ in \mytext.
\end{lemma}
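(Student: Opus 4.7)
The plan is to argue directly from the definition of block compression that no two equal consecutive letters can survive the procedure. I would first fix notation: let $T$ denote the text immediately before block compression (line 4 of \algmain) and $T'$ the text immediately after (when line 7 lists the pairs). Suppose toward a contradiction that $T'$ contains two consecutive equal letters $bb$ at some position.

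Next I would do a case analysis on the origin of each of these two occurrences of $b$. Each letter in $T'$ either (i) is a letter that was already present in $T$ and was not touched by the block compression of any letter in $L$, or (ii) is a fresh symbol $a_\ell$ introduced in place of a maximal block $a^\ell$ of some letter $a\in L$ with $\ell\ge 2$. In case (i)+(i), the two $b$'s sit next to each other in $T$ itself; but then they are contained in a maximal block of $b$ of length at least $2$, so the block-compression step for $b\in L$ would have replaced them by a single fresh symbol $b_m$, contradicting the assumption that both letters survived unchanged. In case (ii), the fresh symbol $b=a_\ell$ is obtained from a maximal block $a^\ell$ in $T$; by maximality, the letters of $T$ immediately to the left and right of this block (if any) are different from $a$. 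After block compression those neighbours become either themselves (if they were not in a block) or some fresh symbol $c_m$ with $c\neq a$; in either case they are different from $a_\ell$, because the fresh symbols $a_\ell$ and $c_m$ are distinct whenever $a\neq c$, and a surviving original letter is certainly not one of the freshly introduced ones. This rules out both the (i)+(ii) and (ii)+(ii) configurations.

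Hence every case leads to a contradiction and $T'$ contains no two consecutive equal letters, which is exactly the statement of the lemma. There is no real obstacle here: the whole argument is just the observation that a maximal block, once compressed to a single fresh symbol, cannot abut either a copy of $a$ (by maximality) nor a copy of the same fresh symbol (by freshness and since two maximal $a$-blocks in $T$ are separated by at least one non-$a$ letter, so two copies of $a_\ell$ cannot end up adjacent). The only mild care needed is to note that the fresh symbols introduced for different letters $a\neq c$ are distinct, and that the symbol $a_\ell$ is different from $a$ itself; both are immediate from the description of the algorithm.
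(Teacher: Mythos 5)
Your proof is correct and follows essentially the same route as the paper: a case distinction on whether the repeated letter is an original letter of \mytext{} (then it was in $L$ and its block of length at least $2$ would have been compressed) or a fresh symbol introduced for a maximal block (then adjacency would contradict maximality of that block). Your version is merely a bit more fine-grained, treating the two occurrences separately and dismissing the mixed case by freshness of the new symbols, which the paper handles implicitly since equal letters must have the same origin.
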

\begin{proof}
Suppose for the sake of contradiction that there are such two letters.
There are two cases:

\begin{description}
	\item[$a$ was present in \mytext{} in line~\ref{listing letters}]
But then $a$ was listed in $L$ in line~\ref{listing letters} and $aa$
was replaced by another letter in line~\ref{outer occurrences}.

\item[$a$ was introduced in line~\ref{outer occurrences}]
Both $a$ replaced some maximal blocks $b^\ell$
thus $aa$ replaced $b^{2\ell}$, and so each of those two $b^\ell$s was not a maximal block.
\qedhere
\end{description}
\end{proof}

\subsection{Pair compression}
The pair compression is performed similarly as the block compression.
However, since the pairs can overlap, compressing all pairs at the same time is not possible.
Still, we can find a subset of non-overlapping pairs in \mytext{} such that a constant fraction
of letters \mytext{} is covered by occurrences of these pairs.
This subset is defined by a \emph{partition} of $\Sigma$ into $\Sigma_\ell$ and $\Sigma_r$
and choosing the pairs with the first letter in $\Sigma_\ell$ and the second in $\Sigma_r$.

\begin{lemma}
\label{lem:finding partition}
For \mytext{} in \size{} time we can find in line~\ref{partition letters}
a partition of $\Sigma$ into $\Sigma_\ell$, $\Sigma_r$ such that
number of occurrences of pairs $ab \in \Sigma_\ell\Sigma_r$ in \mytext{} is at least $(|\mytext|-1)/4$.

In the same running time we can provide, for each $ab \in P \cap \Sigma_\ell\Sigma_r$,
a lists of pointers to occurrences of $ab$ in \mytext.
\end{lemma}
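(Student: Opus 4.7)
The plan is to prove existence of a good partition via a random argument, derandomize it greedily, and implement everything in linear time.

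First I would set up the probabilistic bound. Consider a uniformly random partition in which each letter of $\Sigma$ is assigned independently to $\Sigma_\ell$ or $\Sigma_r$ with probability $1/2$. By Lemma~\ref{lem:consecutive letters are different}, at line~\ref{listing pairs} we have $\mytext[i]\neq \mytext[i+1]$ for every $i$, so for each of the $|\mytext|-1$ consecutive pair positions the events $\mytext[i]\in \Sigma_\ell$ and $\mytext[i+1]\in \Sigma_r$ are independent and each has probability $1/2$. Hence the occurrence at position $i$ is covered with probability $1/4$, and by linearity of expectation the expected number of covered occurrences is $(|\mytext|-1)/4$.

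Next I would derandomize via the method of conditional expectations. Fix any order of the letters of $\Sigma$ appearing in $\mytext$. For each letter $c$ in turn, compute the conditional expectation of the number of covered occurrences given the placements already committed and assuming uniform random placement of the remaining letters; commit $c$ to whichever side makes this conditional expectation no smaller. Since $E[X\mid \text{so far}]$ equals the average of the two options and we take the maximum, the quantity is monotone non-decreasing along the process, so the final deterministic value is at least the initial $(|\mytext|-1)/4$.

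The main obstacle is executing this derandomization in \size{} time. For each still-undecided letter $c$ I would maintain four counters over the pair occurrences touching $c$: the numbers of $(c,b)$-occurrences with $b$ already in $\Sigma_r$ or still undecided, and the numbers of $(a,c)$-occurrences with $a$ already in $\Sigma_\ell$ or still undecided. A short computation shows that the two candidate conditional expectations for $c$ differ only by a simple linear combination of these four counters (specifically, putting $c$ in $\Sigma_\ell$ is preferred iff $L_r(c)+L_u(c)/2 \ge R_\ell(c)+R_u(c)/2$), so the greedy choice takes constant time once the counters are read. To maintain the counters I would precompute, by one linear scan, for each letter $c$ the list of positions in $\mytext$ where it occurs together with its two neighbours; committing $c$ then walks this list and updates the four counters of each (still-undecided) neighbour by $O(1)$ work. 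Each pair occurrence is visited only when its left and when its right endpoint is committed, so the total work is $\Ocomp(|\mytext|)$.

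Finally, to produce the lists of pointers to occurrences of each $ab\in P\cap \Sigma_\ell\Sigma_r$, I would scan $\mytext$ once, collecting records $(\mytext[i],\mytext[i+1],i)$ for those pairs in $\Sigma_\ell\Sigma_r$, and sort them by the first two coordinates with \algradix; using Lemma~\ref{lem:letters are consecutive} the letters lie in an integer interval of length $\Ocomp(N)$, so the sort and the subsequent grouping both cost \size, yielding the required occurrence lists.
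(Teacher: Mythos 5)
Your proof is correct and follows the same overall strategy as the paper: a probabilistic existence argument (valid thanks to Lemma~\ref{lem:consecutive letters are different}), a greedy derandomisation, and \algradix-based bookkeeping made linear by Lemma~\ref{lem:letters are consecutive}. The one genuine difference lies in how the derandomisation is carried out. The paper's greedy (\alggreedypairs) ignores undecided neighbours entirely: when a letter is placed, it counts only occurrences whose other letter is already assigned, which secures at least half of all occurrences for the \emph{undirected} cover $\Sigma_\ell\Sigma_r \cup \Sigma_r\Sigma_\ell$, i.e.\ at least $(|\mytext|-1)/2$ of them, and a final comparison then swaps $\Sigma_\ell$ and $\Sigma_r$ if needed to obtain $(|\mytext|-1)/4$ for the directed product $\Sigma_\ell\Sigma_r$. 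You instead run the method of conditional expectations directly on the directed count, weighting occurrences with an undecided neighbour by $1/2$; your rule $L_r(c)+L_u(c)/2 \geq R_\ell(c)+R_u(c)/2$ is indeed the correct comparison of the two conditional expectations, so the bound $(|\mytext|-1)/4$ is reached immediately and the final orientation swap is unnecessary, at the price of maintaining four counters per letter (including the ``undecided'' ones) instead of the paper's two. Both variants run in \size{} time: your per-letter position lists with neighbour access play the role of the paper's left/right lists, and your post-hoc radix sort producing the occurrence lists (with the membership test $ab \in \Sigma_\ell\Sigma_r$ done in $\Ocomp(1)$ via bit vectors, a detail worth stating explicitly) matches the paper's filtering of the right lists.
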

\begin{proof}
For a choice of $\Sigma_\ell \Sigma_r$ we say that occurrences of $ab \in P \cap \Sigma_\ell\Sigma_r$
are \emph{covered} by $\Sigma_\ell \Sigma_r$.


The existence of partition covering at least one fourth of the occurrences
can be shown by a simple probabilistic argument:
divide $\Sigma$ into $\Sigma_\ell$ and $\Sigma_r$ randomly, where each letter
goes to each of the parts with probability $1/2$.
Consider two consecutive letters $ab$ in \mytext,
note that they are different by Lemma~\ref{lem:consecutive letters are different}.
Then $a \in \Sigma_\ell$ and $b \in \Sigma_r$ with probability $1/4$.
There are $|\mytext|-1$ such pairs in \mytext, so the expected number of pairs in \mytext{}
from $\Sigma_\ell\Sigma_r$ is $(|\mytext|-1)/4$.
Observe, that if we were to count the number of pairs that are covered \emph{either} by $\Sigma_\ell\Sigma_r$
or by $\Sigma_r\Sigma_\ell$ then the expected number of pairs covered by $\Sigma_\ell \Sigma_r \cup \Sigma_r \Sigma_\ell$
is $(|\mytext|-1)/2$.

The deterministic construction of such a partition follows by a simple derandomisation,
using an expected value approach.
It is easier to first find a partition such that at least $(|\mytext|-1)/2$ pairs' occurrences in \mytext{} are covered
by $\Sigma_\ell\Sigma_r \cup \Sigma_r\Sigma_\ell$, we then choose $\Sigma_\ell\Sigma_r$ or $\Sigma_r\Sigma_\ell$,
depending on which of them covers more occurrences.

Suppose that we have already assigned some letters to $\Sigma_\ell$ and $\Sigma_r$
and we are to decide, where the next letter $a$ is assigned.
If it is assigned to $\Sigma_\ell$, then all occurrences of pairs from $a\Sigma_\ell \cup \Sigma_\ell a$ are not going
to be covered, while occurrences of pairs from $a \Sigma_r \cup \Sigma_r a$ are;
similarly observation holds for $a$ being assigned to $\Sigma_r$.
The algorithm makes a greedy choice, maximising the number of covered pairs in each step.
As there are only two options, the choice brings in at least half of occurrences considered.
Lastly, as each occurrence of a pair $ab$ from \mytext{} is considered exactly once
(i.e.\ when the second of $a$, $b$ is considered in the main loop),
this procedure guarantees that at least half of occurrences of pairs in \mytext{} is covered.

In order to make the selection effective,
the algorithm \alggreedypairs{} keeps an up to date counters
$\mycount_\ell[a]$ and $\mycount_r[a]$, 
denoting, respectively, the number of occurrences of pairs from
$a\Sigma_\ell \cup \Sigma_\ell a$ and $a\Sigma_r \cup \Sigma_r a$ in \mytext.
Those counters are updated as soon as a letter is assigned to $\Sigma_\ell$ or $\Sigma_r$.

\begin{algorithm}[H]
  \caption{\alggreedypairs{} \label{greedy pairs}}
  \begin{algorithmic}[1]
  \State $L \gets $ set of letters used in $P$
	\State $\Sigma_\ell \gets \Sigma_r \gets \emptyset$ \Comment{Organised as a bit vector}
	\For{$a \in L$}
		\State $\mycount_\ell[a] \gets \mycount_r[a] \gets 0$ \Comment{Initialisation}
	\EndFor
	\For{$a \in L$}
		\If{$\mycount_r[a] \geq \mycount_\ell[a]$} \Comment{Choose the one that guarantees larger cover}
			\State $\textnormal{\textsl{choice}} \gets \ell$ 
		\Else
			\State $\textnormal{\textsl{choice}} \gets r$
		\EndIf
		\State $\Sigma_{\textnormal{\textsl{choice}}} \gets \Sigma_{\textnormal{\textsl{choice}}} \cup \{a\}$\
		\For{each $ab$ or $ba$ occurrence in \mytext{}}
			\State $\mycount_{\textnormal{\textsl{choice}}}[b] \gets \mycount_{\textnormal{\textsl{choice}}}[b] + 1$ \label{counter update}
		\EndFor
	\EndFor
	\If{\# occurrences of pairs from $\Sigma_r\Sigma_\ell$ in \mytext > \# occurrences of pairs from $\Sigma_\ell\Sigma_r$ in \mytext}
	\label{actual partition choice}
		\State switch $\Sigma_r$ and $\Sigma_\ell$ \label{actual partition}
	\EndIf
	\State \Return $(\Sigma_\ell, \Sigma_r)$
  \end{algorithmic}
\end{algorithm}

By the argument given above, when $\Sigma$ is partitioned into $\Sigma_\ell$ and $\Sigma_r$ by \alggreedypairs,
at least half of the occurrences of pairs from \mytext{} are covered by
$\Sigma_\ell\Sigma_r \cup \Sigma_r \Sigma_\ell$.
Then one of the choices $\Sigma_\ell\Sigma_r$ or $\Sigma_r\Sigma_\ell$ covers at least one fourth of the occurrences.

It is left to give an efficient variant of \alggreedypairs,
the non-obvious operations are the choice of the actual partition in lines~\ref{actual partition choice}--\ref{actual partition}
and the updating of $\mycount_\ell[b]$ or $\mycount_r[b]$ in line~\ref{counter update}. 
All other operation clearly take at most \size{} time.
The former is simple: since we organise $\Sigma_\ell$ and $\Sigma_r$ as bit vectors,
we can read \mytext{} from left to right and calculate the number of pairs from $\Sigma_\ell\Sigma_r$ and those from $\Sigma_r\Sigma_\ell$
in \size{} time (when we read a pair $ab$ we check in $\Ocomp(1)$ time whether $ab \in \Sigma_\ell \Sigma_r$ or $ab \in \Sigma_r \Sigma_\ell$).
Afterwards we choose the partition that covers more occurrences of pairs in \mytext.

To implement the $\mycount$,
for each letter $a$ in \mytext{} we have a \emph{right list}
$\rightl = \makeset{b}{ab \text{ occurs in } \mytext}$, represented as a list.
Furthermore, the element $b$ on right list stores
a list of all occurrences of the pair $ab$ in \mytext.
There is a similar \emph{left list} $\leftl = \makeset{b}{ba \text{ occurs in } \mytext}$.
We comment, how to create left lists and right lists later.

Given \rightl[] and \leftl[], performing the update in line~\ref{counter update} is easy:
we go through $\rightl$ ($\leftl$) and increase the $\mycount_{\textsl{choice}}[b]$ for each occurrence of $ab$ ($ba$, respectively).
Note that in this way each of the list $\rightl$ ($\leftl$) is read once during \alggreedypairs,
and so this time can be charged to their creation.

It remains to show how to initially create \rightl{} (\leftl{} is created similarly).
We read \mytext, when reading a pair $ab$ we create a record
$(a,b,p)$, where $p$ is a pointer to this occurrence.
We then sort these record lexicographically using \algradix.
There are only $\size$ records and we assume that $\Sigma$ can be identified with an interval,
see Lemma~\ref{lem:letters are consecutive}, this is all done in \size.
Now, for a fixed letters $a$, the consecutive tuples with the first coordinate $a$
can be turned into \rightl: for $b \in \rightl$ we want to store a list $I$ of pointers to occurrences of $ab$,
and on a sorted list of tuples the $\{ (a,b,p) \}_{p \in I}$ are consecutive elements.

Lastly, in order to get for each $ab \in P \cap \Sigma_\ell\Sigma_r$,
the lists of pointers to occurrences of $ab$ in \mytext{}
it is enough to read \rightl[] and filter the pairs such that $a\in \Sigma_\ell$ and $b \in \Sigma_r$;
the filtering can be done in $\Ocomp(1)$ as $\Sigma_\ell$ and $\Sigma_r$ are represented as bitvectors.
The needed time is \size.

The total running time is also \size, as each subprocedure has time constant per pair processed or \size{} in total.
\end{proof}

When for each pair $ab \in \Sigma_\ell\Sigma_r$ the list of its occurrences in \mytext{} is provided,
the replacement of pairs is done by going through the list and replacing each of the pair, which is done in linear time.
Note, that as $\Sigma_\ell$, $\Sigma_r$ are disjoint, the considered pairs cannot overlap.

\subsection{Size and running time}
It remains to estimate the total running time, summed over all phases.
Clearly each subprocedure in a phase has a running time \size{}
so it is enough to show that $|T|$ is reduced by a constant factor per phase.
\begin{lemma}
\label{lem:number of phases}
In each phase $|\mytext|$ is reduced by a constant factor.
\end{lemma}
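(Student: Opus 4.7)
The plan is to track the length of \mytext{} through the two stages of the phase, using only Lemma~\ref{lem:consecutive letters are different} and Lemma~\ref{lem:finding partition}, and to verify that the pair compression stage alone already removes a constant fraction of the letters.

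Let $T_0$, $T_1$, $T_2$ denote the length of \mytext{} at the beginning of the phase, after line~\ref{outer occurrences} (i.e.\ after block compression) and at the end of the phase, respectively. First I would note the easy inequality $T_1 \leq T_0$: block compression only replaces maximal blocks $a^\ell$ with $\ell \geq 2$ by a single new letter, so it cannot increase the length. It may or may not reduce it substantially, and I would not try to exploit any lower bound here.

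The main step uses the partition guarantee. After block compression, Lemma~\ref{lem:consecutive letters are different} ensures that every two consecutive letters of \mytext{} are distinct, which is precisely the hypothesis under which Lemma~\ref{lem:finding partition} produces a partition $\Sigma = \Sigma_\ell \cup \Sigma_r$ covering at least $(T_1-1)/4$ occurrences of pairs $ab \in \Sigma_\ell \Sigma_r$ in \mytext. Because $\Sigma_\ell$ and $\Sigma_r$ are disjoint, two covered occurrences cannot overlap (an overlap would force the middle letter to belong to both $\Sigma_\ell$ and $\Sigma_r$), so the algorithm actually replaces all of them. Each such replacement turns two letters into one, decreasing the length by exactly $1$, hence
\[
T_2 \leq T_1 - \frac{T_1-1}{4} = \frac{3T_1+1}{4} \leq \frac{3T_0+1}{4}.
\]

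Finally, since the main loop is entered only when $|\mytext|>1$, we have $T_0 \geq 2$, and therefore $T_2 \leq (3T_0+1)/4 \leq 7T_0/8$, which gives the required constant-factor reduction. There is no real obstacle here beyond invoking the two earlier lemmas; the only subtlety worth pointing out in writing is that the non-overlapping property of covered pairs, needed to convert "at least $(T_1-1)/4$ covered occurrences" into "at least $(T_1-1)/4$ actual length reductions", follows immediately from $\Sigma_\ell \cap \Sigma_r = \emptyset$.
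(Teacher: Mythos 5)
Your proof is correct and follows essentially the same route as the paper: block compression cannot increase the length, and Lemma~\ref{lem:finding partition} supplies at least $(T_1-1)/4$ non-overlapping covered pair occurrences, each of which shortens the text by one. The only difference is cosmetic --- the paper disposes of the additive constant via a separate case $m<5$, while you absorb it using $T_0\geq 2$ from the loop guard to obtain the explicit factor $7/8$, which is if anything slightly tidier.
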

\begin{proof}
Let $m = |\mytext|$ at the beginning of the phase.
Let $m' \leq m$ be the length of \mytext{} after the compression of blocks.
First observe that if $m < 5$ then we satisfy the lemma when we make at least one compression, which can be always done,
so in the following we assume that $m \geq 5$.

By Lemma~\ref{lem:finding partition} at least $(m'-1)/4$ pairs are compressed during the pair compression,
hence after this phase $|\mytext'| \leq m' - (m'-1)/4 \leq \frac 3 4 m +\frac 1 4$.
\end{proof}

\begin{theorem}
\algmain{} runs in linear time.
\end{theorem}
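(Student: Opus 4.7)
The plan is to show that a single phase runs in $\Ocomp(|\mytext|)$ time, where $|\mytext|$ denotes the length at the start of that phase, and then to sum this bound over all phases using the geometric decay of $|\mytext|$ from Lemma~\ref{lem:number of phases}.

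For the per-phase bound I would simply walk through Algorithm~\ref{alg:main} line by line. The renaming of Lemma~\ref{lem:letters are consecutive} runs in \size{} time and ensures that \algradix{} is applicable to the symbols of \mytext. Collecting $L$ and performing the block compression are \size{} (create one triple per maximal block, \algradix-sort them, overwrite each block with its fresh letter using the doubly-linked list representation); constructing the representations of the new block letters from Lemma~\ref{lem: cost of powers} also fits in \size, since for a letter $a$ with compressed blocks $a^{\ell_1}<\dots<a^{\ell_k}$ one has $\sum_{i=1}^k [1+\log(\ell_i-\ell_{i-1})] = \Ocomp(k+\ell_k)$ (using $\log x\le x$), and summing over all letters this is bounded by the total length of compressed blocks, hence by $|\mytext|$. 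Collecting $P$, finding the partition $\Sigma_\ell,\Sigma_r$ and producing the occurrence lists of covered pairs are \size{} by Lemma~\ref{lem:finding partition}, and the final replacement costs $\Ocomp(1)$ per covered occurrence on the doubly-linked list, hence \size{} in total.

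For the summation step, let $|T_i|$ denote the length of \mytext{} at the start of the $i$-th phase, with $|T_0|=N$. By Lemma~\ref{lem:number of phases}, $|T_{i+1}| \le \tfrac{3}{4}|T_i| + \tfrac{1}{4}$, and an elementary induction gives $|T_i| \le (3/4)^i N + 1$. Since the main loop terminates once $|T_i|\le 1$, the total running time is
\[
  \sum_{i \ge 0} \Ocomp(|T_i|) \;=\; \Ocomp\!\Big(N \sum_{i \ge 0} (3/4)^i \Big) \;=\; \Ocomp(N).
\]

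There is no real obstacle here: all the hard work has already been absorbed into the preceding lemmas on per-phase subroutines and on length decay. The one subtlety worth flagging is that one cannot simply multiply the per-phase bound by the $\Ocomp(\log N)$ phases; the early phases dominate, and it is the telescoping geometric sum that collapses the total back down to linear.
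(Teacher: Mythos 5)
Your proposal is correct and follows essentially the same route as the paper: each phase costs $\size$ (by the per-phase lemmas you cite), and since Lemma~\ref{lem:number of phases} makes the lengths decay geometrically from $N$, the total is the geometric sum $\Ocomp(N)$. The extra details you supply (the $\Ocomp(k+\ell_k)$ bound on building the block representations and the explicit telescoping) are just an elaboration of the paper's one-line argument, not a different approach.
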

\begin{proof}
Each phase clearly takes \size{} time and by Lemma~\ref{lem:number of phases}
the $|\mytext|$ drops by a constant factor in each phase.
As the initial length of $\mytext$ is $N$, the total running time is $\Ocomp(N)$.
\end{proof}

\section{Size of the grammar: SLPs and recompression}
To bound cost of representing the letters introduced during the construction of the grammar,
we start with the smallest grammar $G$ generating (the input) \mytext{}
and then modify it so that it generates \mytext{} (i.e.\ the current string kept by \algmain) after each of the compression steps.
Then the cost of representing the introduced letters is paid by various credits assigned to $G$.
Hence, instead of the actual representation cost, which is difficult to estimate, we calculate the total value of issued credit.
Note that this is entirely a mental experiment for the purpose of the analysis,
as $G$ is not stored or even known to the algorithm.
We just perform some changes on it depending on the \algmain{} actions.

We assume that grammar $G$ is a \emph{Straight Line Programme} (\emph{SLP}),
however, we relax the notion a bit (and call it an \emph{SLP with explicit letters}, when an explicit reference is needed):
i.e.\ its nonterminals are numbered $X_1$, \ldots, $X_m$ and each rule has at most
two nonterminals (with smaller indices) in its body, 
(i.e.\ there are two, one or none nonterminals and arbitrary number of letters in the rule's body).
Note that every CFG generating a unique string can be transformed into an SLP with explicit letters,
with the size increased only by a constant factor.
We call the letters (strings) occurring in the productions the \emph{explicit letters} (\emph{strings}, respectively).
The unique string derived by $X_i$ is denoted by $\eval(X_i)$;
the grammar $G$ shall satisfy the condition $\eval(X_m) = \mytext$.
We do not assume that $\eval(X_i) \neq \epsilon$, however, if $\eval(X_i) = \epsilon$ then $X_i$
is not used in the productions of $G$ (as this is a mental experiment, such $X_i$ can be removed from the rules
and in fact from the SLP).

With each explicit letter we associate two units of \emph{credit} and pay 
most of the cost of representing the letters introduced during \algmain{} with these credits.
More formally:
when the algorithm modifies $G$ and in the process it creates an occurrence of a letter,
we \emph{issue} (or pay) $2$ new credits.
On the other hand, if we do a compression step in $G$, then we remove some
occurrences of letters. The credit associated with these occurrences is then \emph{released}
and can be used to pay for the representation cost of the new letters introduced by the compression step
(so that the algorithm does not issue new credit). 
For pair compression the released credit indeed suffices to pay
both the credit of the new letters occurrences and their representation cost,
but for chain compression the released credit does not suffice,
as it is not enough to pay the representation cost.
Here we need some extra amount that will be estimate separately later on
in Section~\ref{subsec: blocks representation cost}.
In the end, the total cost is the sum of credit that was issued during the modifications of $G$
plus the value that we estimate separately in Section~\ref{subsec: blocks representation cost}.

Recall that whenever we say nonterminal, rule, production etc., we mean one of $G$.

\subsection{Intuition}
When we replace each occurrence of the pair $ab$ in \mytext,
we should also do this in $G$. However, this may be not possible, as some $ab$
generated by $G$ do not come from explicit pairs in $G$ but rather are `between' a nonterminal
and a letter, for instance in a simple grammar $X_1 \to a$, $X_2 \to X_1b$ the pair $ab$ has such a problematic occurrence.
If there are no such occurrences, it is enough to replace each explicit $ab$ in $G$ and we are done.
To deal with the problematic ones, we need to somehow change the grammar,
in the example above we replace $X_1$ with $a$, leaving only $X_2 \to ab$, for which the previous procedure can be applied.
It turns out that a systematic procedure that deals with all such problems at once can be given,
it is the main ingredient of this section and it is given in Section~\ref{subsec: pair compression}.
Similar problems occur also when we want to replace maximal blocks of $a$ and the solution
to this problem is similar and it is given in Section~\ref{subsec:block compression}.

Note that in the example above, when $X_1$ is replaced with $a$, $2$ credit for the occurrence of $a$ in $X_1 \to a$ is released and wasted.
Then we issue $2$ credit for the new occurrence of $a$ in the rule $X_2$.
When $ab$ is replaced with $c$, $4$ credit is released when $ab$ is removed from the rule,
$2$ of this credit is used for the credit of $c$ and the remaining $2$ can be used to pay the representation cost for $c \to ab$.

\subsection{Pair compression}
\label{subsec: pair compression}
A pair of letters $ab$ has a \emph{crossing occurrence} in a nonterminal $X_i$
(with a rule $X_i \to \alpha_i$) if $ab$ is in $\eval(X_i)$ but this occurrence does not
come from an explicit occurrence of $ab$ in $\alpha_i$ nor it is generated by any of the
nonterminals in $\alpha_i$.
A pair is \emph{non-crossing} if it has no crossing occurrence.
Unless explicitly written, we use this notion only to pairs of \emph{different} letters.

By $\PC_{ab\to c}(w)$ we denote the text obtained from $w$ by replacing each $ab$ by a letter $c$
(we assume that $a \neq b$).
We say that a procedure (that changes a grammar $G$ with nonterminals $X_1, \ldots, X_m$
to $G'$ with nonterminals $X_1', \ldots, X_m'$)
\emph{properly implements the pair compression} of $ab$ to $c$,
if $\eval(X_m') = \PC_{ab \to c}(\eval(X_m))$ and $G'$ is an SLP with explicit letters.
When a pair $ab$ is noncrossing the procedure that implements the pair compression
is easy to give: it is enough to replace each explicit $ab$ with $c$.

\begin{algorithm}[H]
  \caption{$\algpairncr(ab,c)$: compressing a non-crossing pair $ab$.
  \label{alg:noncrossing compression}}
  \begin{algorithmic}[1]
	\State replace each explicit $ab$ in $G$ by $c$
  \end{algorithmic}
\end{algorithm}

In order to distinguish between the nonterminals, grammar, etc.\
before and after the application of compression of $ab$ (or, in general, any procedure) 
we use `primed' letters, i.e.\  $X_i'$, $G'$, \textci{}
for the nonterminals, grammar and text after this compression and `unprimed',
i.e.\ $X_i$, $G$, \mytext{} for the ones before.

\begin{lemma}
\label{lem:noncrossing compression}
If $ab$ is a noncrossing pair, then $\algpairncr(ab,c)$ properly implements the compression of $ab$.
The credit of new letters in $G'$ and cost of representing the new letter $c$ is paid by the released credit;
no new credit is issued.
If a pair $de$, where $d \neq c \neq e$ is noncrossing in $G$, it is in $G'$.
\end{lemma}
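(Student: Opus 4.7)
The plan is to verify the three parts of the lemma in turn. For the implementation correctness I would proceed by induction on the nonterminal index $i$, showing $\eval(X_i') = \PC_{ab \to c}(\eval(X_i))$ after the algorithm. With body $\alpha_i = s_0 X_{j_1} s_1 \cdots X_{j_k} s_k$ (with $k \leq 2$ and each $s_l$ an explicit string), any explicit $ab$ in $\alpha_i$ must lie wholly inside some $s_l$, as no $X_{j_l}$ is a letter. Hence $\alpha_i' = \PC_{ab \to c}(s_0)\, X_{j_1}'\, \PC_{ab \to c}(s_1) \cdots X_{j_k}'\, \PC_{ab \to c}(s_k)$, so by the inductive hypothesis $\eval(X_i')$ is the concatenation $\PC_{ab \to c}(s_0)\, \PC_{ab \to c}(\eval(X_{j_1})) \cdots \PC_{ab \to c}(s_k)$. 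The remaining step is that $\PC_{ab \to c}$ distributes over this concatenation: the only obstruction would be an $a$ at the end of one factor and a $b$ at the start of the next, but such an $ab$ would straddle a boundary of the form $s_l \mid \eval(X_{j_{l+1}})$ or $\eval(X_{j_l}) \mid s_l$ and so be a crossing occurrence of $ab$ in $X_i$, contradicting the hypothesis. Taking $i = m$ gives the desired identity, and since the algorithm only alters explicit letters inside rule bodies, $G'$ remains an SLP with explicit letters.

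For the credit accounting, every replacement of an explicit $ab$ by $c$ removes two letter occurrences (releasing $4$ credits) and introduces one $c$ (requiring $2$ credits). The net surplus of $2$ credits per replacement covers the representation cost $2$ of the single rule $c \to ab$, which is paid once across the whole procedure; consequently no new credit is issued. If no explicit $ab$ exists in $G$ to begin with, the procedure is vacuous, $c$ is never introduced, and the claim is trivial.

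For the preservation claim, fix a pair $de$ with $d \neq c \neq e$ that is non-crossing in $G$ and take any occurrence of $de$ in $\eval(X_i')$. Since neither $d$ nor $e$ equals $c$, these letters were not produced by the compression and so trace back to a genuine $d$ and a genuine $e$ at positions $p_d < p_e$ in $\eval(X_i)$. Moreover $p_e = p_d + 1$: the map $\PC_{ab \to c}$ only merges adjacent pairs into a single letter, so it cannot delete material between two surviving letters that end up adjacent in the output. Hence $\eval(X_i)$ carries $de$ at position $p_d$, and by hypothesis this occurrence is non-crossing in $X_i$. Either it comes from an explicit $de$ in $\alpha_i$, in which case one must have $de \neq ab$ (otherwise that explicit pair would have been collapsed to a single $c$ in $\alpha_i'$, leaving no $d, e$ at those positions), so the $de$ survives verbatim in $\alpha_i'$ and witnesses the occurrence; or it lies inside some $\eval(X_{j_l})$ with $X_{j_l}$ appearing in $\alpha_i$, in which case the corresponding occurrence inside $\eval(X_{j_l}')$ witnesses the occurrence in $X_i'$. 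Either way the occurrence is non-crossing in $G'$.

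The main technical subtlety is tracking positions across $\eval(X_i)$ and $\eval(X_i')$: once one observes that letters different from $c$ in $\eval(X_i')$ correspond bijectively to the same letters in $\eval(X_i)$ and that surviving adjacency lifts to original adjacency, both the correctness and the preservation argument reduce to routine bookkeeping. The non-crossing hypothesis is used in exactly one place, namely to rule out $ab$ straddling a nonterminal boundary so that $\PC_{ab \to c}$ distributes over the concatenation defining $\eval(X_i)$.
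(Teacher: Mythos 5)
Your proof is correct and follows essentially the same route as the paper: induction on the nonterminal index using the non-crossing hypothesis to rule out occurrences of $ab$ straddling factor boundaries, the $4$-credits-released versus $2$-credits-plus-representation-cost accounting, and tracing surviving $d$, $e$ letters back to $G$ for the preservation claim. Your treatment of the last point is merely a more detailed spelling-out of the paper's one-line observation that no new occurrences of $d$ or $e$ are created.
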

\begin{proof}
By induction on $i$ we show that $\eval(X_i') = \PC_{ab \to c}(\eval(X_i))$.
Consider any occurrence of $ab$ in the string generated by $X_i$.
If it is an explicit string then it is replaced by $\algpairncr(ab,c)$.
If it is contained within substring generated by some $X_j$, this occurrence was compressed
by the inductive assumption.
The remaining case is the crossing occurrence of $ab$:
since the only modifications to the rules made by $\algpairncr(ab,c)$
is the replacement of $ab$ by $c$, such a crossing pair existed already before $\algpairncr(ab,c)$,
but this is not possible by the lemma assumption that $ab$ is non-crossing.

Each occurrence of $ab$ had two units of credit while $c$ has only $2$,
so the replacement released $4$ units of credit,
$2$ of which are used to pay for the credit of $c$ and the other $2$ to pay the cost of representation of $c$
(if we replace more than one occurrence of $ab$, some credit is wasted).

Lastly, replacing $ab$ in $G$ by a new letter $c$ cannot make $de$ (where $d \neq c \neq e$) a crossing pair in $G$,
as no new occurrence of $d$, $e$ was introduced on the way.
\qedhere
\end{proof}

If all pairs in $\Sigma_\ell\Sigma_r$ are non-crossing, iteration of $\algpairncr(ab,c)$ for each pair
$ab$ in $\Sigma_\ell\Sigma_r$ properly implements the pair compression for all pairs in $\Sigma_\ell\Sigma_r$
(note that as $\Sigma_\ell$ and $\Sigma_r$ are disjoint, occurrences of different pairs from $\Sigma_\ell\Sigma_r$
cannot overlap and so the order of replacement does not matter).
So it is left to assure that indeed the pairs from $\Sigma_\ell\Sigma_r$ are all noncrossing.
It is easy to see that $ab \in \Sigma_\ell\Sigma_r$ is a crossing pair if and only if
one of the following three `bad' situations occurs:
\begin{enumerate}[CP1]
	\item \label{CP1}there is a nonterminal $X_i$, where $i<m$, such that $\eval(X_i)$ begins with $b$
	and $aX_i$ occurs in one of the rules;
	\item \label{CP2}there is a nonterminal $X_i$, where $i<m$, such that $\eval(X_i)$ ends with $a$
	and $X_ib$ occurs in one of the rules;
	\item \label{CP3}there are nonterminals $X_i$, $X_j$, where $i,j <m$, such that $\eval(X_i)$ ends with $a$
	and $\eval(X_j)$ begins with $b$ and $X_iX_j$ occurs in one of the rules.
\end{enumerate}
Consider \CPref{1}, let $bw = \eval(X_i)$.  Then it is enough to modify the rule for $X_i$
so that $\eval(X_i) = w$ and replace each $X_i$ in the rules by $bX_i$,
we call this action the \emph{left-popping $b$ from $X_i$}.
Similar operation of right-popping a letter $a$ from $X_i$ is symmetrically defined.
It is shown in the Lemma~\ref{lem:uncrossing pairs} below that they indeed take care of all crossing occurrences of $ab$.

Furthermore, left-popping and right-popping can be performed for many letters in parallel:
the below procedure $\algpop(\Sigma_\ell,\Sigma_r)$ `uncrosses' all pairs
from the set $\Sigma_\ell\Sigma_r$, assuming that $\Sigma_\ell$ and $\Sigma_r$ are disjoint
subsets of $\Sigma$ (and we apply it only in the cases in which they are).

\begin{algorithm}[H]
  \caption{$\algpop(\Sigma_\ell,\Sigma_r)$: Popping letters from $\Sigma_\ell$ and $\Sigma_r$
  \label{pop code full}}
  \begin{algorithmic}[1]
  \For{$i \gets 1 \twodots m-1$}
		\State let the production for $X_i$ be $X_i \to \alpha_i$
		\If{the first symbol of $\alpha_i$ is $b \in \Sigma_r$} \label{still to the right}  \Comment{Left-popping $b$}
			\State remove this $b$ from $\alpha_i$
			\State replace $X_i$ in $G$'s productions by $bX_i$
			\If{$\eval(X_i) = \epsilon$}
				\State remove $X_i$ from $G$'s productions
			\EndIf
		\EndIf
	\EndFor	
  \For{$i \gets 1 \twodots m-1$}
		\State let the production of $X_i$ be $X_i \to \alpha_i$
		\If{the last symbol of $\alpha_i$ is $a \in \Sigma_\ell$} \label{still to the left} \Comment{Right-popping $a$}
			\State remove this $a$ from $\alpha_i$
			\State replace $X_i$ in $G$'s productions by $X_ia$ 
			\If{$\eval(X_i) = \epsilon$}
				\State remove $X_i$ from $G$'s productions
			\EndIf
		\EndIf
	\EndFor
  \end{algorithmic}
\end{algorithm}

\begin{lemma}
\label{lem:uncrossing pairs}
After application of $\algpop(\Sigma_\ell,\Sigma_r)$, where $\Sigma_\ell \cap \Sigma_r = \emptyset$,
none of the pairs $ab\in \Sigma_\ell\Sigma_r$ is crossing.
Furthermore, $\eval(X_m') = \eval(X_m)$.
At most $\Ocomp(m)$ credit is issued during $\algpop(\Sigma_\ell,\Sigma_r)$.
\end{lemma}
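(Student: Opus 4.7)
The lemma has three parts---preservation of $\eval(X_m)$, the $\Ocomp(m)$ credit bound, and the absence of crossing pairs in $\Sigma_\ell\Sigma_r$---which I would prove in turn, the last being the only non-trivial step. For $\eval$-preservation, each single left-pop of $b$ from $X_i$ preserves $\eval(X_k)$ for every $k$: the new $\eval(X_i)$ is the old one with the leading $b$ removed, while every other occurrence of $X_i$ is rewritten as $bX_i$ and still expands to the old $\eval(X_i)$ (the degenerate case $\eval(X_i)=\epsilon$ is handled by explicit removal of $X_i$). Right-pops are symmetric, and iterating over all pops in $\algpop$ gives $\eval(X_m')=\eval(X_m)$. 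For the credit bound, each pop at $X_i$ creates one new letter occurrence adjacent to each occurrence of $X_i$ in the productions; since an SLP with explicit letters has at most $2m$ nonterminal occurrences in total and each nonterminal is popped at most once per pass, the total number of new letter occurrences is $\Ocomp(m)$ and so is the issued credit.

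The heart is to show no crossing pair in $\Sigma_\ell\Sigma_r$. I would split by type, with the main case being \CPref{1} after the first pass. Fix an explicit $aX_j$ with $a$ a letter at the end of the first pass. If $a$ was the letter left-popped from $X_j$ by the algorithm, the guard forces $a\in\Sigma_r$, so $a\notin\Sigma_\ell$. Otherwise $a$ was already immediately to the left of $X_j$ at the moment the algorithm processed $X_j$, and then no left-pop was performed at $X_j$ (else the popped letter would sit between $a$ and $X_j$), so the first symbol of $\alpha_j$ at that moment was either a letter outside $\Sigma_r$ or some nonterminal $X_l$. In the latter situation I would follow the leftmost-nonterminal chain $X_j\to X_l\to X_{l'}\to\cdots$, whose indices strictly decrease; it terminates at some $X_{l^{(k)}}$ whose $\alpha$ at the time of processing began with a letter $c$, and because a left-pop anywhere in the chain would have prepended a letter and broken the chain one level above, no such pop occurred and hence $c\notin\Sigma_r$. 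Either way, $\eval(X_j)$ begins outside $\Sigma_r$, so $(a,\text{first letter of }\eval(X_j))\notin\Sigma_\ell\Sigma_r$.

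The \CPref{2} case is the mirror statement for the second pass. The \CPref{3} case after the first pass falls out of the same chain argument: an explicit $X_iX_j$ with $\eval(X_j)$ starting in $\Sigma_r$ would, by the chain analysis, force a left-pop at $X_j$ that inserts a $\Sigma_r$ letter between $X_i$ and $X_j$, contradicting the explicitness of $X_iX_j$. Finally, I must verify that the second pass cannot create a fresh \CPref{1} crossing: it only appends $\Sigma_\ell$ letters after nonterminals, so the only candidate would be a rule $X_iX_j$ after the first pass with $\eval(X_i)$ ending in $\Sigma_\ell$ and $\eval(X_j)$ beginning in $\Sigma_r$---which is exactly the \CPref{3} situation just excluded. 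The main obstacle throughout is the leftmost-nonterminal chain argument for \CPref{1}; the rest follows from it by symmetry or by direct contradiction.
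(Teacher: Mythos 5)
Your proposal is correct and takes essentially the same route as the paper: your leftmost-nonterminal chain argument is just the unrolled (contrapositive) form of the paper's induction showing that if $\eval(X_i')$ begins with a letter of $\Sigma_r$ then a letter was left-popped from $X_i$, and your dispatch of \CPref{1}--\CPref{3} together with the $\Ocomp(m)$ credit count matches the paper's argument. One tiny wrinkle: in your \CPref{1} dichotomy the letter $a$ need not have been adjacent to $X_j$ when $X_j$ was processed even if it was not popped from $X_j$ (it may have been popped from a neighbouring nonterminal that was afterwards deleted as empty), but this subcase is harmless, since every letter inserted during the first pass lies in $\Sigma_r$ and hence not in $\Sigma_\ell$.
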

\begin{proof}
Observe first that whenever we remove $b$ from the front of some $\alpha_i$ we replace each of $X_i$ occurrence with $bX_i$
and if afterwards $\eval(X_i) = \epsilon$ then we remove $X_i$ from the rules,
hence the words derived by each other nonterminal (in particular $X_m$) do not change,
the same applies to replacement of $X_i$ with $X_ia$.
Hence, in the end $\eval(X_m') = \eval(X_m) = \mytext$ (note that we do not pop letters from $X_m$).

Secondly, we show that if $\eval(X_i')$ begins with a letter $b' \in \Sigma_r$ then we left-popped a letter from $X_i$
(which by the code is some $b \in \Sigma_r$),
a similar claim (by symmetry) of course holds for the last letter of $\eval(X_i)$ and $\Sigma_r$.
So suppose that the claim is not true and consider the nonterminal $X_i$ with the smallest $i$ such that $\eval(X_i')$ begins with $b' \in \Sigma_r$
but we did not left-pop a letter from $X_i$.
Consider what was the first symbol in $\alpha_i$ when \algpop{} considered $X_i$
in line~\ref{still to the right}. As \algpop{} did not left-pop a letter from $X_i$, the first letter of $\eval(X_i)$ and
$\eval(X_i')$ is the same and hence it is $b' \in \Sigma_r$.
So $\alpha_i$ cannot begin with a letter as then it is $b' \in \Sigma_r$, which should have been left-popped.
Hence it is some nonterminal $X_j$ for $j < i$.
But then $\eval(X_j')$ begins with $b' \in \Sigma_r$ and so by the induction assumption \algpop{} left-popped a letter from $X_j$.
But there was no way to remove this letter from $\alpha_i$, so $\alpha_i$ should begin with a letter, contradiction.

Suppose that after \algpop{} there is a crossing pair $ab \in \Sigma_\ell\Sigma_r$.
There are three already mentioned cases \CPref{1}--\CPref{3}: consider only \CPref{1},
in which $aX_i$ occurs in the rule and $\eval(X_i)$ begins with $b$.
Note that as $a \notin \Sigma_r$ is the letter to the left of $X_i'$,
$X_i'$ did not left-pop a letter.
But it begins with $b \in \Sigma_r$, so it should have. Contradiction.
The other cases are dealt with in a similar manner.

Note that at most $4$ new letters are introduced to each rule, thus at most $8m$ credit is issued.
\end{proof}

In order to compress pairs from $\Sigma_\ell\Sigma_r$ it is enough to first uncross them all using $\algpop(\Sigma_\ell,\Sigma_r)$ 
and then compress them all by $\algpairncr(ab,c)$ for each $ab \in \Sigma_\ell\Sigma_r$.

\begin{algorithm}[H] \normalsize
  \caption{$\algpaircr(\Sigma_\ell,\Sigma_r)$: compresses pairs from $\Sigma_\ell \Sigma_r$}
  \begin{algorithmic}[1]
	\State run $\algpop(\Sigma_\ell,\Sigma_r)$
	\For{$ab \in \Sigma_\ell\Sigma_r$}
		\State run $\algpairncr(ab,c)$ \Comment{$c$ is a fresh letter}
	\EndFor
  \end{algorithmic}
\end{algorithm}

\begin{lemma}
\label{lem:pc crossing}
\algpaircr{} implements pair compression for each $ab \in \Sigma_\ell\Sigma_r$.
It issues $\Ocomp(m)$ new credit to $G$, where $m$ is the number of nonterminals of $G$.
The credit of the new letters introduced to $G$ and their representation costs
are covered by the credit issued or released by \algpaircr.
\end{lemma}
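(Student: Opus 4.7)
The plan is to simply combine the two previously established lemmas: Lemma~\ref{lem:uncrossing pairs} on the effect of $\algpop$ and Lemma~\ref{lem:noncrossing compression} on the effect of $\algpairncr$ applied to a non-crossing pair.

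First I would analyse the initial call to $\algpop(\Sigma_\ell,\Sigma_r)$. By Lemma~\ref{lem:uncrossing pairs}, after this call we have $\eval(X_m') = \eval(X_m) = \mytext$, no pair $ab \in \Sigma_\ell\Sigma_r$ is crossing, and the total credit issued during this step is $\Ocomp(m)$. This absorbs the full $\Ocomp(m)$ credit bound claimed in the lemma; the subsequent compression calls will be shown to issue no new credit.

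Next I would handle the loop compressing the pairs. The key observation is that although Lemma~\ref{lem:noncrossing compression} only gives correctness for a single non-crossing pair, it also guarantees that compressing $ab$ to a fresh letter $c$ keeps every pair $de$ with $d \neq c \neq e$ non-crossing. Since each new letter introduced for compression is fresh (in particular, it belongs to neither $\Sigma_\ell$ nor $\Sigma_r$), every pair in $\Sigma_\ell\Sigma_r$ still pending compression remains non-crossing throughout the loop. Hence each call to $\algpairncr(ab,c)$ satisfies its precondition and, by Lemma~\ref{lem:noncrossing compression}, properly implements the compression of that $ab$ while covering both the credit of the newly introduced $c$ and the representation cost $2$ of the rule $c\to ab$ from the $4$ units of credit released by the removal of an explicit $ab$; no new credit is issued.

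To conclude, I would argue that the composed procedure indeed implements the simultaneous pair compression of all $ab \in \Sigma_\ell\Sigma_r$. Since $\Sigma_\ell$ and $\Sigma_r$ are disjoint, any two occurrences of pairs from $\Sigma_\ell\Sigma_r$ in $\eval(X_m)$ are disjoint, so compressing them sequentially in any order yields the same final string as performing all replacements in parallel; combining this with the equality $\eval(X_m') = \eval(X_m)$ from the $\algpop$ step, the final grammar generates exactly the text obtained by compressing every pair in $\Sigma_\ell\Sigma_r$. The main (and in fact only) subtle point is precisely the preservation of non-crossingness across the loop iterations, which is why the last clause of Lemma~\ref{lem:noncrossing compression} was stated with the condition $d \neq c \neq e$; beyond this, everything reduces to bookkeeping of credits already established in the previous lemmas.
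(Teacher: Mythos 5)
Your proposal is correct and follows essentially the same route as the paper's own proof: apply Lemma~\ref{lem:uncrossing pairs} to uncross all pairs of $\Sigma_\ell\Sigma_r$ with $\Ocomp(m)$ issued credit, then iterate Lemma~\ref{lem:noncrossing compression}, using its preservation clause (with the fresh letters lying outside $\Sigma_\ell\cup\Sigma_r$) and the disjointness of $\Sigma_\ell$ and $\Sigma_r$ to argue that the sequential compressions realise the simultaneous pair compression with all credit and representation costs covered by released credit. Your writeup is in fact slightly more explicit than the paper's about why freshness of $c$ keeps pending pairs non-crossing, but the argument is the same.
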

\begin{proof}
By Lemma~\ref{lem:uncrossing pairs} after $\algpop(\Sigma_\ell,\Sigma_r)$ each pair in $\Sigma_\ell\Sigma_r$ is non-crossing and 
$\Ocomp(m)$ credit is issued in the process, furthermore $\eval(X_m)$ does not change.

By Lemma~\ref{lem:noncrossing compression} for a non-crossing pair $ab$ the $\algpairncr(ab,c)$
implements the pair compression, furthermore, any other non-crossing pair $a'b' \in \Sigma_\ell\Sigma_r$
remains non-crossing. Lastly, all occurrences of different pairs from $\Sigma_\ell\Sigma_r$ are disjoint (as $\Sigma_\ell$ and $\Sigma_r$
are disjoint subsets of $\Sigma$) as so the order of replacing them does not matter and so we implemented the pair compression
for all pairs in $\Sigma_\ell\Sigma_r$. The cost of representation and credit of new letters is covered by the released credit,
see Lemma~\ref{lem:noncrossing compression}.
\end{proof}

\begin{corollary}
	\label{cor: pair compression cost}
The compression of pairs issues in total $\Ocomp(m\log N)$ credit during the run of \algmain;
the credit of the new letters introduced to $G$ and their representation costs
are covered by the credit issued or released during \algpaircr.
\end{corollary}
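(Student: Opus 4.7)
The plan is to obtain the corollary by summing the per-phase bound of Lemma~\ref{lem:pc crossing} over all phases of \algmain. Two ingredients are needed: a bound on the number of phases, and the observation that the bound $\Ocomp(m)$ on credit issued by a single \algpaircr{} call remains valid throughout the whole run.

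First I would count phases. By Lemma~\ref{lem:number of phases}, $|\mytext|$ shrinks by a constant factor per phase; since $|\mytext| = N$ initially and the main loop terminates once $|\mytext| \le 1$, there are $\Ocomp(\log N)$ phases in total. In each phase, \algpaircr{} is invoked exactly once (on the pair compression step corresponding to the chosen partition $\Sigma_\ell,\Sigma_r$), so the total credit issued by pair compression is $\Ocomp(\log N)$ times the per-invocation cost.

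Next I would argue that the per-invocation cost remains $\Ocomp(m)$ throughout. The value $m$ in Lemma~\ref{lem:pc crossing} is the number of nonterminals of the (modified) grammar $G$ at the moment \algpaircr{} is applied. The operations used by \algpop{} and \algpairncr{} change productions, replace letters, and may delete nonterminals whose values become $\epsilon$, but they never introduce new nonterminals; hence the number of nonterminals is non-increasing along the run and is bounded by the initial value, which is $\leq g$. Thus each invocation of \algpaircr{} issues $\Ocomp(m)$ credit, and summing over $\Ocomp(\log N)$ phases yields $\Ocomp(m \log N)$ credit issued in total by pair compression.

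Finally, the coverage statement is already local: Lemma~\ref{lem:pc crossing} asserts that within a single invocation of \algpaircr{} the credit of the new letters introduced to $G$ and their representation costs are paid from the credit issued or released during that very invocation. Since each new letter produced by pair compression appears in exactly one invocation of \algpaircr, this per-invocation coverage trivially aggregates to global coverage across the whole run. I do not foresee a real obstacle here; the only point worth stating carefully is the monotonicity of $m$ under the mental-experiment modifications of $G$, which is the reason the per-phase bound can be invoked with the same $m$ in every phase.
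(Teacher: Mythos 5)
Your proposal is correct and matches the paper's (implicit) argument: the corollary is obtained exactly by combining the per-invocation bound of Lemma~\ref{lem:pc crossing} with the $\Ocomp(\log N)$ bound on the number of phases from Lemma~\ref{lem:number of phases}, noting that the number of nonterminals of $G$ never increases, and the coverage claim aggregates phase by phase.
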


\subsection{Blocks compression}
\label{subsec:block compression}
Similar notions and analysis as the ones for pairs are applied for blocks.
Consider occurrences of maximal $a$-blocks in \mytext{} and their derivation by $G$.
Then a block $a^\ell$ has a \emph{crossing occurrence} in $X_i$ with a rule $X_i \to \alpha_i$,
if it is contained in $\eval(X_i)$ but this occurrence is not
generated by the explicit $a$s in the rule nor in the substrings generated by the nonterminals in $\alpha_i$.
If $a$s blocks have no crossing occurrences, then $a$ \emph{has no crossing blocks}.
As for noncrossing pairs, the compression of $a$ blocks, when it has no crossing blocks, is easy:
it is enough to replace each explicit maximal $a$-block in the rules of $G$.
We use similar terminology as in the case of pairs: we say that a subprocedure properly implements a block compression for $a$.

\begin{algorithm}[H] \normalsize
  \caption{$\algblocks(a)$, which compresses $a$ blocks when $a$ has no crossing blocks \label{ac code}}
  \begin{algorithmic}[1]
  \For{each $a^{\ell_m}$} 
		\State \label{replace non-extendible}
		replace every explicit maximal block $a^{\ell_m}$ in $G$ by $a_{\ell_m}$
  \EndFor
  \end{algorithmic}
\end{algorithm}

\begin{lemma}
If $a$ has no crossing blocks then $\algblocks(a)$ properly implements the $a$'s blocks compression.

Furthermore, if a letter $b$ from $\mytext$ had no crossing blocks in $G$,
it does not have them in $G'$.
\end{lemma}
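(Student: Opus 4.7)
The plan is to follow the structure of Lemma~\ref{lem:noncrossing compression}. I would prove by induction on $i$ that $\eval(X_i')$ equals $\eval(X_i)$ with every maximal $a$-block of length $\ell_m$ replaced by the corresponding $a_{\ell_m}$; setting $i = m$ then yields the first claim. The second claim requires a short direct argument that no previously non-crossing $b$-block becomes crossing under the rule rewriting.

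For the induction, fix $X_i \to \alpha_i$ and write $\alpha_i = u_0 X_{j_1} u_1 X_{j_2} u_2$ (with the appropriate modifications when fewer than two nonterminals appear), so that $\eval(X_i) = u_0 \eval(X_{j_1}) u_1 \eval(X_{j_2}) u_2$. The hypothesis that $a$ has no crossing blocks in $G$ means that every maximal $a$-block of $\eval(X_i)$ lies entirely inside one of the $u_k$'s or entirely inside one $\eval(X_{j_k})$; in particular, each explicit maximal $a$-run $a^{\ell_m}$ inside some $u_k$ remains maximal in $\eval(X_i)$, since otherwise it would be extended by $a$'s generated by an adjacent nonterminal, which is precisely the definition of a crossing block. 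The algorithm replaces each such explicit run in $u_k$ by $a_{\ell_m}$, and by the inductive hypothesis each $\eval(X_{j_k}')$ is obtained from $\eval(X_{j_k})$ by the same replacement; concatenating gives $\eval(X_i') = \AC_a(\eval(X_i))$, where $\AC_a$ denotes maximal-block replacement of $a$.

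For the preservation claim, let $b \in \mytext$ have no crossing blocks in $G$. If $b \neq a$, then \algblocks{} leaves every occurrence of $b$ in every rule at its original position, and its left and right neighbours are unchanged except that a run of $a$'s adjacent to $b$ may be collapsed into a single fresh letter $a_{\ell_m} \neq b$; this collapse cannot turn a $b$ into part of a crossing $b$-block, since the sequence of sources (explicit vs.\ nonterminal-generated) of $b$'s in $\eval(X_i')$ mirrors that in $\eval(X_i)$ exactly. If $b = a$, then every explicit $a$ that survives in a modified rule must already have been bounded on both sides by non-$a$ material in $G$, for otherwise it would have belonged to a crossing $a$-block in $G$; moreover, the rewriting changes the first/last letter of some $\eval(X_j)$ from $a$ to $a_{\ell_m}$, which only makes adjacent explicit $a$'s more isolated, not less. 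Hence the surviving $a$'s in $G'$ form length-one maximal blocks that extend to nothing beyond themselves.

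The main obstacle, I expect, is the $b = a$ case: one must argue that no \emph{new} crossing $a$-block can emerge from the interplay between a leftover explicit $a$ in some rule and the boundary letters of an adjacent nonterminal whose evaluation has been altered. This is exactly where the non-crossing hypothesis for $a$ in $G$ must be invoked to preclude the only configurations that could cause trouble, and the slightly stronger contrapositive — "an adjacency between an explicit $a$ and an $a$-boundary of a neighbour would have been crossing in $G$" — is the key bookkeeping observation that makes the short argument work.
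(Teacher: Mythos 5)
Your proposal is correct and takes essentially the route the paper intends: the paper omits this proof, stating it is analogous to Lemma~\ref{lem:noncrossing compression}, i.e.\ an induction on $i$ showing $\eval(X_i')$ is the block-compressed image of $\eval(X_i)$ (with the non-crossing hypothesis ruling out blocks spanning explicit and nonterminal-generated letters), plus the observation that the replacement introduces no new occurrences of any other letter, so no new crossing blocks arise. Your extra care in the $b=a$ case (surviving explicit $a$'s were already isolated, and collapsing $a$-runs into fresh letters $a_{\ell}\neq a$ only removes adjacencies) is a harmless refinement of the same argument.
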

The proof is similar to the proof of Lemma~\ref{lem:noncrossing compression}
and so it is omitted.
Note that we do not yet discuss the issued credit, nor the cost of the representation of letters
representing blocks (the latter is done in Section~\ref{subsec: blocks representation cost}).

It is left to ensure that no letter has a crossing block.
The solution is similar to \algpop, this time though we need to remove the whole prefix and suffix
from $\eval(X_i)$ instead of a single letter. The idea is as follows: suppose that $a$ has a crossing block
because $aX_i$ occurs in the rule and $\eval(X_i)$ begins with $a$.
Left-popping $a$ does not solve the problem, as it might be that $\eval(X_i)$ still begins with $a$.
Thus, we keep on left-popping until the first letter of $\eval(X_i)$ is not $a$,
i.e.\ we remove the $a$-prefix of $\eval(X_i)$.
The same works for suffixes.
 
\begin{algorithm}[H]
  \caption{$\algremblocks$: removing crossing blocks.
  \label{removing outer letters}}
  \begin{algorithmic}[1]
  \For{$i \gets 1 \twodots m-1$}
		\State let $a$, $b$ be the first and last letter of $\eval(X_i)$ 
		\State let $\ell_i$, $r_i$ be the length of the $a$-prefix and $b$-suffix of $\eval(X_i)$
		\\ \Comment{If $\eval(X_i) \in a^*$ then $r_i = 0$ and $\ell_i = |\eval(X_i)|$}
		\State remove $a^{\ell_i}$ from the beginning and $b^{r_i}$ from the end of $\alpha_i$
		\State replace $X_i$ by $a^{\ell_i}X_ib^{r_i}$ in the rules \label{popping prefix}
		\If{$\eval(X_i) = \epsilon$}
			\State remove $X_i$ from the rules
		\EndIf
	\EndFor
	\end{algorithmic}
\end{algorithm}

\begin{lemma}
\label{lem: no crossing blocks}
After \algremblocks{} no letter has a crossing block and $\eval(X_m) = \eval(X_m')$.
\end{lemma}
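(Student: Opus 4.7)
The plan is to verify both conclusions by induction on the loop counter $i$, driven by a single structural invariant describing how nonterminals sit inside rules.

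For $\eval(X_m') = \eval(X_m)$ I would show by induction that each iteration preserves $\eval(X_k)$ for every $k$. In iteration $i$, stripping the $a^{\ell_i}$ prefix and $b^{r_i}$ suffix of $\alpha_i$ shortens $\eval(X_i)$ by exactly these letters, while simultaneously substituting $a^{\ell_i} X_i b^{r_i}$ for every other occurrence of $X_i$ restores the original $\eval(X_i)$ at each such site. A routine induction then gives $\eval(X_k') = \eval(X_k)$ for all $k$, and in particular for $X_m$.

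For the no-crossing claim, the key invariant — to be maintained from iteration $j$ onwards — is: every remaining occurrence of $X_j$ in any rule is immediately flanked on the left by an explicit $a_j$ and on the right by an explicit $b_j$ (where $a_j$ and $b_j$ are the original first and last letters of $\eval(X_j)$), while the new $\eval(X_j')$ neither starts with $a_j$ nor ends with $b_j$ (and if $\eval(X_j') = \epsilon$ then $X_j$ is deleted altogether). It holds right after iteration $j$ by construction. I would then show that subsequent iterations preserve it: substitutions triggered by processing $X_{j'}$ with $j' > j$ only add letters around $X_{j'}$, leaving the immediate neighbourhood of $X_j$ intact; the only other modification is prefix/suffix stripping while processing an enclosing $X_k$, and this removes letters only at the very boundary of $\alpha_k$, so an interior $X_j$ keeps its flanking $a_j, b_j$, while an $X_j$ near the front may at worst be exposed at the boundary of $\alpha_k'$ (empty left neighbour — still no crossing). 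In particular, two processed nonterminals are never adjacent in any rule. Combined with the facts that $\eval(X_j')$ starts with a letter different from $a_j$ and ends with one different from $b_j$, none of the three configurations that produce a crossing block (letter--nonterminal, nonterminal--letter, or nonterminal--nonterminal adjacency with matching boundary letter) can occur.

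The most delicate point, and the one I expect to be the main obstacle, is the well-definedness of the operation ``remove $a^{\ell_i}$ from the beginning of $\alpha_i$'': the algorithm tacitly assumes that the $a_i$-prefix of $\eval(X_i)$ actually appears as explicit letters at the start of the current $\alpha_i$. I would deduce this from the same invariant. First, the first symbol of $\alpha_i$ cannot be a nonterminal $X_j$, since in iteration $j$ the substitution prepended $a_j^{\ell_j}$ (with $\ell_j \ge 1$) to every occurrence of $X_j$; hence the first symbol is necessarily an explicit letter. Secondly, if the explicit $a_i$-prefix of $\alpha_i$ were shorter than $\ell_i$, it would be immediately followed by some nonterminal $X_j$ whose $\eval(X_j')$ begins with $a_i$; but by the invariant the explicit letter preceding $X_j$ is $a_j$, forcing $a_j = a_i$, and then $\eval(X_j')$ cannot begin with $a_j$ --- a contradiction. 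A symmetric argument handles the suffix.
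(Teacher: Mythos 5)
Your proposal is correct and takes essentially the same route as the paper's proof: preservation of $\eval(X_m)$ via the re-inserted popped prefixes/suffixes, and an induction on the nonterminal index showing that already-processed nonterminals stay flanked by their former first/last letters while their new values begin/end with different letters, which yields both the well-definedness of the stripping (the prefix/suffix of $\eval(X_i)$ occur explicitly in $\alpha_i$) and the absence of crossing blocks. Your explicit flanking invariant, including the boundary-exposure case, merely spells out what the paper's closing remark (``other cases are handled similarly'') leaves implicit.
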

\begin{proof}
Firstly, $\eval(X_m') = \eval(X_m)$: observe that when we remove $a$-prefix $a^{\ell_i}$ from $\alpha_i$ we replace each $X_i$ with $a^{\ell_i}X_i$
(ans similarly for the $b$-suffix), also when we remove $X_i$ from the rules then $\eval(X_i) = \epsilon$.
Hence when processing $X_i$, the strings generated by all other nonterminals are not affected.
In particular, as we do not remove the prefix and suffix of $X_m$, the string generated by $X_m$ remains the same after \algremblocks.

By above observation, the value of $\eval(X_i)$ does not change until \algremblocks{} considers $X_i$.
We show that when $\algremblocks$ considers $X_i$ such that $\eval(X_i)$ has $a$-prefix $a^{\ell_i}$ and $b$-suffix $b^{r_i}$,
then $\alpha_i$ begins with $a^{\ell_i}$ and ends with $b^{r_i}$ (the trivial case, when $\eval(X_i) = a^{\ell_i}$ is shown in the same way).
Suppose that this is not the case and consider $X_i$ with smallest $i$ for which this is not true.
Clearly it is not $X_1$, as there are no nonterminals in $\alpha_1$ and so $\eval(X_1) = \alpha_1$.
So let $X_i$ have a rule $X_i \to \alpha_i$, we deal only with the $a$-prefix, the proof of $b$-suffix is symmetrical.
Since the $a$-prefix of $\eval(X_i)$ and $\alpha_i$ are different,
this means that the $a$-prefix of $\eval(X_i)$ is partially generated by the first nonterminal in $\alpha_i$, let it be $X_j$.
By the choice of $i$ we know that $X_j$ popped its prefix (of some letter, say $a'$) and so it was replaced with $a'^{\ell_j}X_j'$.
Furthermore, $\eval(X_j')$ begins with $a'' \neq a'$.
Since there is no way to remove this $a'$ prefix from $\alpha_i$,
this $a'^{\ell_j}$ is part of the $a$-prefix of $\eval(X_i)$, in particular $a' = a$.
However, $\eval(X_j')$ begins with $a'' \neq a$, so the $a$-prefix of $\alpha_i$ and $\eval(X_i)$ is the same, contradiction.

As a consequence, if $aX_i$ occurs in any rule, then $a$ is not the first letter of $\eval(X_i)$,
as prefix of letters $a$ was removed from $X_i$. Other cases are handled similarly.
So there are no crossing blocks after \algremblocks.
\qedhere 
\end{proof}

So the compression of all blocks of letters is done by first running \algremblocks{}
and then compressing each of the block by \algblocks.
Note that we do not compress blocks of letters that are introduced in this way.
Concerning the number of credit, the arbitrary long blocks popped by \algremblocks{}  
are compressed (each into a single letter) and so at most $8$ credit per rule is issued.

\begin{algorithm}[H] \normalsize
  \caption{\algblocksc: compresses blocks of letters}
  \begin{algorithmic}[1]
	\State run \algremblocks
	\State $L \gets$ list of letters in \mytext{}
  \For{each $a \in L$} 
		\State run $\algblocks(a)$
  \EndFor
  \end{algorithmic}
\end{algorithm}

\begin{lemma}
\label{lem:blocksc}
\algblocksc{} properly implements the blocks compression for each letter $a$ occurring in \mytext{} before its application
and issues $\Ocomp(m)$ credit.
The issued credit covers the cost of credit of letters introduced during the \algblocksc{}
(but not their representation cost).
\end{lemma}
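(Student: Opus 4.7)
My plan is to split the argument into correctness of \algblocksc{} and the credit analysis, relying on the two preceding lemmas (Lemma~\ref{lem: no crossing blocks} and the unnamed correctness lemma for $\algblocks(a)$). For correctness, I would first invoke Lemma~\ref{lem: no crossing blocks} to obtain that after \algremblocks{} no letter has a crossing block and $\eval(X_m)$ is unchanged. Then, iterating over $a \in L$, each call $\algblocks(a)$ properly implements the blocks compression for $a$ by the preceding lemma (since $a$ has no crossing blocks at this point), and the ``furthermore'' clause of that lemma ensures that the remaining letters still have no crossing blocks after the call. A straightforward induction on the loop iterations then yields that \algblocksc{} properly implements the blocks compression for every letter $a$ occurring in \mytext{} at the start of the procedure.

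For the credit bound, the crucial observation is that \algremblocks{} and the subsequent \algblocks{} calls must be analysed \emph{together}. Considered in isolation, \algremblocks{} may introduce arbitrarily long blocks $a^{\ell_i}$ and $b^{r_i}$ at every reference to $X_i$, which in itself would consume credit proportional to the total prefix and suffix lengths. However, each such long block becomes a maximal block in some rule and hence gets compressed to a single letter by the subsequent $\algblocks(a)$ call; pairing each length-$\ell_i$ creation by \algremblocks{} with the matching length-$\ell_i$ removal by \algblocks{}, the released credit pays for all but $2$ units of credit per surviving letter.

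Quantitatively, I would argue per-rule: each rule $X_j \to \alpha_j$ contains at most two nonterminals, so \algremblocks{} appends at most two $a$-prefixes and two $b$-suffixes to $\alpha_j$, each of which collapses to a single letter after blocks compression. Any $a$-prefix or $b$-suffix removed from $\alpha_j$ itself by \algremblocks{} only releases credit. Hence the number of \emph{surviving} new letter occurrences per rule is $\Ocomp(1)$, and summing over the $m$ rules gives $\Ocomp(m)$ net credit issued. The representation cost of the new block letters $a_\ell$ is deliberately excluded, as the lemma statement defers it to Section~\ref{subsec: blocks representation cost}.

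The main obstacle I anticipate is the credit accounting for transient letter occurrences. One has to argue that the order of operations in \algremblocks{} (which interleaves removing letters from $\alpha_i$ and inserting copies at reference sites) together with the later compression phase allows released credit to cover the issuances cleanly, rather than merely balancing in the aggregate. There is also a minor subtlety when a freshly popped block $a^{\ell_i}$ abuts an already-explicit $a$-letter or another popped $a$-block in the containing rule, since two blocks can then merge into one maximal block; this only helps the bound but must be handled carefully when counting the surviving letter occurrences per rule.
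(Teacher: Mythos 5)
Your proposal is correct and follows essentially the paper's own route: the paper omits the proof as ``similar to Lemma~\ref{lem:pc crossing}'' (correctness via Lemma~\ref{lem: no crossing blocks} plus the non-crossing block-compression lemma and its preservation clause), and its credit accounting is exactly your joint analysis, stated in the text before the algorithm, that each block popped by \algremblocks{} is compressed into a single letter, giving at most a constant ($8$) credit per rule and hence $\Ocomp(m)$ in total.
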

The proof is similar as the proof of Lemma~\ref{lem:pc crossing} so it is omitted.

\begin{corollary}
\label{cor: block compression cost}
During the whole \algmain{} the \algblocksc{} issues in total at most $\Ocomp(m \log N)$ credit.
The credit of the new letters introduced to $G$ is covered by the issued credit.
\end{corollary}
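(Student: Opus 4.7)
The plan is to reduce this corollary to an amortised count: bound the cost per phase via Lemma~\ref{lem:blocksc} and then multiply by the number of phases, which is logarithmic by Lemma~\ref{lem:number of phases}. This is exactly the same pattern that yielded Corollary~\ref{cor: pair compression cost} for \algpaircr, so the argument should mirror that one.

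First, I would observe that the total number of phases of \algmain{} is $\Ocomp(\log N)$. This follows from Lemma~\ref{lem:number of phases}: $|\mytext|$ shrinks by a constant factor in each phase, starts at $N$, and the main loop terminates once $|\mytext| \leq 1$. Second, I would invoke Lemma~\ref{lem:blocksc} to argue that a single call of \algblocksc{} within a phase issues only $\Ocomp(m)$ credit to the (current version of) $G$, where $m$ is the number of nonterminals of $G$. The key point to check here is that $m$ is a global bound and not something that can blow up from phase to phase: both \algpaircr{} and \algblocksc{} only alter the rule bodies and possibly delete nonterminals whose value becomes $\epsilon$, so the number of nonterminals of $G$ never exceeds its initial value, which is at most $g$. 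Multiplying the per-phase bound by the number of phases then gives a total of $\Ocomp(m \log N)$ credit issued by \algblocksc{} across the whole run of \algmain.

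For the second sentence, I would simply appeal to the second part of Lemma~\ref{lem:blocksc}: in each phase the credit issued during \algblocksc{} already suffices to pay for the credit associated with the new letter occurrences it introduces into $G$. Summing this per-phase guarantee over all phases yields the global coverage claim. Note that the representation cost of the letters $a_\ell$ that replace blocks is explicitly \emph{not} covered here; that part is deferred to Section~\ref{subsec: blocks representation cost}, and the corollary is careful to make only the credit-covers-credit claim.

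There is no real obstacle in this argument; it is a direct $\Ocomp(m) \cdot \Ocomp(\log N)$ summation. The only thing worth double-checking is that the per-phase bound of Lemma~\ref{lem:blocksc} is indeed uniform across phases, which is why I would explicitly remark that the nonterminal count of $G$ is monotonically non-increasing under the modifications performed by \algremblocks, \algblocks, \algpop{} and \algpairncr.
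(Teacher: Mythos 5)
Your argument is correct and is exactly the reasoning the paper intends (it leaves this corollary without an explicit proof, as the immediate consequence of Lemma~\ref{lem:blocksc} applied once per phase together with the $\Ocomp(\log N)$ bound on the number of phases from Lemma~\ref{lem:number of phases}). Your extra remark that the nonterminal count of $G$ never increases, so the per-phase $\Ocomp(m)$ bound is uniform, is a sensible sanity check and consistent with the paper.
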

Note that the cost of representation of letters replacing blocks is not covered by the credit,
this cost is separately estimated in the next subsection.

\subsection{Calculating the cost of representing letters in block compression}
\label{subsec: blocks representation cost}
The issued credit is enough to pay the 2 credit for occurrences of letters introduced during \algmain{}
and the released credit is enough to pay the credit of the letters introduced during the pair compression and their representation cost.
However, credit alone cannot cover the representation cost of letters replacing blocks.
The appropriate analysis is presented in this section. The overall plan is as follows:
firstly, we define a scheme of representing the letters based on the grammar $G$
and the way $G$ is changed by \algblocksc{} (the \emph{$G$-based representation}).
Then for such a representation schema, we show that the cost of representation is $\Ocomp(g \log N)$.
Lastly, it is proved that the actual cost of representing the letters by \algmain{} (the \emph{\algmain-based representation})
is smaller than the $G$-based one, hence it is also $\Ocomp(g \log N)$.

\subsubsection{$G$-based representation}
The intuition is as follows: while the $a$ blocks can have exponential length,
most of them do not differ much, as in most cases the new blocks are obtained by
concatenating letters $a$ that occur explicitly in the grammar
and in such a case the released credit can be used to pay for the representation cost.
This does not apply when the new block is obtained by concatenating two different blocks of $a$
(popped from nonterminals) inside a rule.
However, this cannot happen too often: when blocks of length $p_1$, $p_2$, \ldots, $p_\ell$
are compressed (at the cost of $\Ocomp\Big(\sum_{i=1}^\ell \left(1 + \log p_i\right)\Big) = \Ocomp(\log (\prod_{i=1}^\ell p_i))$,
as each $p_i \geq 2$), the length of the corresponding text in the input text is $\prod_{i=1}^\ell p_i$, which is at most $N$.
Thus $\Ocomp\Big(\sum_{i=1}^\ell (1 + \log p_i) \Big)= \Ocomp(\log \prod_{i=1}^\ell p_i) = \Ocomp(\log N$) cost per nonterminal is scored.

We create a new letter for each $a$ block in the rule $X_i \to \alpha_i$
after \algremblocks{} popped prefixes and suffixes from $X_1,\ldots, X_{i-1}$ but before it popped letters from $X_i$.
(We add the artificial empty block $\epsilon$ to streamline the later description and analysis.)
Such a block is a \emph{power} if
it is obtained by concatenation of two $a$-blocks popped from nonterminals inside a rule (and perhaps some other explicit letters $a$),
note that this power may be then popped from a rule (since it is a prefix or suffix in this rule).
This implies that in the rule $X_i \to u X_j v X_k w$ the popped suffix of $X_j$ and
popped prefix of $X_k$ are blocks of the same letter, say $a$, and furthermore $v \in a^*$.
Note that it might be that one (or both) of $X_j$ and $X_k$ were removed in the process
(in this case the power can be popped from a rule as well).
For each block $a^{\ell}$ that is not a power we may uniquely identify another block $a^k$
(perhaps $\epsilon$, not necessarily a power)
such that $a^\ell$ was obtained by concatenating $\ell - k$ explicit letters to $a^k$ in some rule.
\begin{lemma}
	\label{lem: not powers}
For each block $a^{\ell}$ represented in the $G$-based representation that is not a power there is block $a^k$ 
(perhaps $k = 0$) such that $a^k$ is also represented in $G$-based representation
and $a^\ell$ was obtained in a rule by concatenating $\ell - k$ explicit letters that existed in the rule to $a^k$.
\end{lemma}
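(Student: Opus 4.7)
The plan is to trace the letters of the block $a^\ell$ back to their \emph{origins} in the rule $X_i \to \alpha_i$ at the moment \algremblocks{} is about to process $X_i$. By that moment, $\alpha_i$ differs from its initial form only through \algremblocks{}'s previous rewrites of earlier nonterminals $X_j$ with $j < i$: every occurrence of such an $X_j$ has been replaced by $a_j^{\ell_j} X_j b_j^{r_j}$ (with $X_j$ itself removed entirely if $\eval(X_j)$ vanished). Consequently each letter currently appearing in $\alpha_i$ is either \emph{original} (explicitly present in $\alpha_i$ from the start) or \emph{popped} (inserted by one of these substitutions into a position adjacent to a nonterminal).

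Within the maximal $a$-block $a^\ell$, popped $a$s can occur only at its two extremes: the leftmost run can come from an $a$-suffix popped out of a nonterminal that originally stood immediately to the left of the block, and the rightmost run can come from an $a$-prefix popped out of a nonterminal immediately to the right; any letter strictly inside the block must be original, as a nonterminal sitting between two popped pieces would split the run. Thus $a^\ell$ decomposes as (popped-left)$\cdot$(original-middle)$\cdot$(popped-right), with either side possibly empty.

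If both popped sides were nonempty, then $a^\ell$ would be exactly the concatenation of two $a$-blocks popped from nonterminals (with possibly some original $a$s in between), which is precisely the definition of a power and contradicts the hypothesis. So at most one side contributes a popped piece; I set $a^k$ to be that popped piece, taking $k=0$ (so $a^k = \epsilon$, which is included among the represented blocks by convention) when neither side was popped. The remaining $\ell - k$ letters are then exactly the original explicit $a$s of $\alpha_i$, giving the claimed decomposition.

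Finally, when $k > 0$, I need to argue that $a^k$ is itself represented in the $G$-based scheme. Here I appeal to (the proof of) Lemma~\ref{lem: no crossing blocks}: when \algremblocks{} was about to process the source nonterminal $X_j$, the first (resp.\ last) symbols of $\alpha_j$ already formed an $a$-prefix (resp.\ $a$-suffix) of length exactly $k$, so $a^k$ was at that moment a maximal $a$-block of $\alpha_j$, hence represented in the $G$-based representation. The main delicacy is precisely this alignment between the popped piece and a maximal block in $\alpha_j$, together with the mild edge case $\eval(X_j) \in a^*$ (where $X_j$ disappears entirely); in the latter case $\alpha_j$ at the relevant moment is itself just $a^k$, so the argument goes through unchanged.
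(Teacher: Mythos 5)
Your argument is correct and follows essentially the same route as the paper's own proof: you count how many popped prefixes/suffixes take part in the block (two makes it a power, hence a contradiction; one gives $a^k$, which was a maximal block of $\alpha_j$ just before \algremblocks{} processed $X_j$ and is therefore itself represented; none gives $k=0$), with the remaining $\ell-k$ letters being explicit letters of the rule. The only slight imprecision --- when the source nonterminal is removed (the $\eval(X_j)\in a^*$ case you mention), its popped piece can sit strictly inside the block, so the (popped)(original)(popped) shape is not literally exact --- does not affect the counting argument or the conclusion.
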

Note that the block $a^k$ is not necessarily unique: it might be that there are several $a^\ell$ blocks in $G$
which are obtained as different concatenations of $a^k$ and $\ell - k$ explicit letters.
\begin{proof}
Let $a_\ell$ be created in the rule for $X_i$, after popping prefixes and suffixes from $X_1,\ldots,X_{i-1}$.
Consider, how many popped prefixes and suffixes take part in this $a^\ell$.

If two, then it is a power, contradiction.

If one, then let the popped prefix (or suffix) be $a^k$.
Since it was popped, say from $X_j$, then $a^k$ was a maximal block in $X_j$ before popping, so it is represented as well.
Then in the rule for $X_i$ the $a^\ell$ is obtained by concatenating $\ell - k$ letters $a$ to $a^k$.
None of those letters come from popped prefixes and suffixes, so they are all explicit letters that were present in this rule.

If there are none popped prefixes and suffixes that are part of this $a^\ell$,
then all its letters are explicit letters from the rule for $X_i$,
and we treat it as a concatenation of $k$ explicit letters to $\epsilon$.
\qedhere
\end{proof}

We represent the blocks as follows:
\begin{enumerate}
	\item \label{power}for a block $a^\ell$ that is a power we represent $a_\ell$ using the binary expansion,
	which costs $\Ocomp(1 + \log \ell)$;
	\item \label{credit}for a block $a^\ell$ that is obtained by concatenating $\ell - k$ explicit letters
	to a block $a^k$ (see Lemma~\ref{lem: not powers})
	we represent $a_\ell$ as $a_k a\ldots a$ which has a representation cost of $\ell - k + 1$,
	this cost is covered by the $2(\ell-k) \geq \ell - k + 1$ credit released by the $\ell - k$ explicit letters $a$.
	Note that the credit released by those letters was not used for any other purpose.
	(Furthermore recall that the $2$ units of credit per occurrence of $a_\ell$ in the rules of grammar are already covered 
	by the credit issued by \algblocksc, see Lemma~\ref{lem:blocksc}.)
\end{enumerate}
We refer to cost in \ref{power} as the \emph{cost of representing powers}
and redirect this cost to the nonterminal in whose rule this power is created.
The cost in \ref{credit}, as marked there, is covered by released credit.

\subsubsection{Cost of $G$-based representation}
We now estimate the cost of representing powers.
The idea is that if nonterminal $X_i$ is charged the cost of representing powers of length
$p_1$, $p_2$, \ldots, $p_\ell$, which have representation cost $\Ocomp(\sum_{i=1}^\ell \log p_i) = \Ocomp(\log (\prod_{i=1}^\ell p_i))$,
then in the input this nonterminal generated a text of length at least $p_1 \cdot p_2 \cdots p_\ell \leq N$
and so the total cost of representing powers is $\Ocomp(\log N)$ (per nonterminal).
This is formalised in the lemma below.

\begin{lemma}
\label{clm:cost from power to rule}
The total cost of representing powers by $G$-based representation charged towards a single rule is $\Ocomp(\log N)$.
\end{lemma}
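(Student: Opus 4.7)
The plan is to exploit the SLP structure crucially: since each rule's body contains at most two nonterminal occurrences, say $X_j$ and $X_k$, and a power requires the popped suffix of one nonterminal and the popped prefix of another to meet at an interface with a middle block of the same letter, at most one power can be created in $\alpha_i$ per phase. I list the powers charged to $X_i$ chronologically as $a_{(1)}^{p_1},\ldots,a_{(\ell)}^{p_\ell}$ with each $p_t\geq 2$. It will suffice to show $\prod_{t=1}^\ell p_t\leq N$: since $\log p_t\geq 1$, this gives $\sum_t(1+\log p_t)\leq 2\sum_t \log p_t = 2\log\prod_t p_t\leq 2\log N$.

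The heart of the proof is showing that the powers nest multiplicatively when measured in the input alphabet. After phase $t$, the compressed letter $b_t$ that replaces the $t$-th power has weight (the length of the original substring it derives) equal to $w(b_t)=p_t\cdot w(a_{(t)})$, and it occupies the single position of $\alpha_i$ lying between the remaining fragments of $X_j$ and $X_k$. A subsequent pair compression within the same phase may merge $b_t$ with a neighbour into a larger letter $c_t$ of weight $w(c_t)\geq w(b_t)$; in either case, the symbol at the $X_j$--$X_k$ interface has weight at least $p_t\cdot w(a_{(t)})$. For a power to form in phase $t+1$ at the same interface, the middle portion between $X_j$ and $X_k$ must be a block of a single letter; since that middle portion contains the surviving symbol from phase $t$, the letter of the $(t+1)$-th power is precisely that symbol, yielding $w(a_{(t+1)})\geq p_t\cdot w(a_{(t)})$ and, inductively, $w(a_{(t)})\geq p_1 p_2\cdots p_{t-1}$.

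Therefore the letter produced by the $\ell$-th power has weight $\geq p_1 p_2\cdots p_\ell$ in the input alphabet, and since it derives a substring of $\eval(X_i)$, I conclude $\prod_t p_t\leq |\eval(X_i)|\leq N$, proving the lemma. The main obstacle is the careful case analysis in the nesting step: one must verify that pair compressions in phase $t$ do not disconnect the interface from the symbol of weight $\geq w(b_t)$, that any material popped from $X_j$ or $X_k$ in phase $t+1$ can participate in a power only as a block of the surviving interface letter, and handle the degenerate cases in which $X_j$ or $X_k$ is removed entirely (at which point $\alpha_i$ has fewer than two nonterminals and no further power can be charged to $X_i$).
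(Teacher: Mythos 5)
Your proposal is correct and follows essentially the same route as the paper: both rest on the weight argument that the letter of each successive power at the $X_j$--$X_k$ interface satisfies $\weight(a^{(i+1)})\geq p_i\cdot\weight(a^{(i)})$, so $\prod_i p_i\leq N$ and the charged cost $\sum_i(1+\log p_i)\leq 2\log N$. The only cosmetic difference is that you fold the case in which a nonterminal is removed into the same product bound (as the final power in the chain), whereas the paper treats it as a separate one-time $\Ocomp(\log N)$ charge per rule; both variants are sound.
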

\begin{proof}
There are two cases:
first, after the creation of the power in a rule $X_i \to u X_j v X_k w$ one of the nonterminals
$X_j$, $X_k$ is removed. But this happens at most once for the rule and the cost of $\Ocomp(\log N)$
of representing the power can be charged to a rule.

The second and crucial case is when after the creation of power
both nonterminals remained in a rule $X_i \to u X_j v X_k w$.
Note that creation of the $a$ power here means that
$\eval(X_j)$ has $a$-suffix, $\eval(X_k)$ an $a$-prefix and  $v \in a^*$.

Fix this rule and consider all such creations of powers performed on this rule.
Let the consecutive letters, whose blocks are compressed, be
$a^{(1)}$, $a^{(2)}$, \ldots, $a^{(\ell)}$ and their lengths $p_1$, $p_2$, \ldots, $p_\ell$.
Lastly, the $p_\ell$ repetitions of $a^{(\ell)}$ are replaced by $a^{(\ell+1)}$.
(Observe, that $a^{(i+1)}$ does not need to be the letter that replaced the $a^{(i)}$'s block,
as there might have been some other compression performed on that letter.)
Then the cost of the representing powers is constant time more than
\begin{equation}
	\label{eq:representation cost blocks}
	\sum_{i=1}^\ell (1 + \log p_i) \leq 2 \sum_{i=1}^\ell \log p_i \enspace .
\end{equation}

Define \emph{weight}: for a letter it is the length of the
substring of the original input string that it `derives'.
Note that the maximal weight of any letter is $N$, the length of the input word.

Consider the weight of the strings between $X_j$ and $X_k$.
Clearly, after the $i$-th blocks compression it is exactly $p_i \cdot \weight(a^{(i)})$,
as the block of $p_i$ letters $a^{(i)}$ was replaced by one letter.
We claim that $\weight(a^{(i+1)}) \geq p_i \weight(a^{(i)})$:
right after the $i$-th blocks compression the string between $X_j$ and $X_k$
is simply a letter $a^{(i)}_{p_i}$, which replaced the $p_i$ block of $a^{(i)}$.
After some operations, this string consists of $p_{i+1}$ letters $a^{(i+1)}$.
Observe that $(a^{(i+1)})^{p_{i+1}}$ `derives' $a^{(i)}_{p_i}$: indeed all operations performed by \algmain{} do not remove
the letters from string between $X_j$ and $X_k$ in a rule, only replace strings with single letters
and perhaps add letters at the ends of this string.
But if $(a^{(i+1)})^{p_{i+1}}$ `derives' $a^{(i)}_{p_i}$, i.e.\ a single letter, then also $a^{(i+1)}$ `derives' $a^{(i)}_{p_i}$,
hence 
$$
\weight(a^{(i+1)}) \geq \weight(a^{(i)}_{p_i}) = p_i \weight(a^{(i)}) \enspace .
$$
Since $\weight(a^{(1)}) \geq 1$ it follows that $\weight(a^{(\ell+1)}) \geq \prod_{i=1}^\ell p_i$.
As $\weight(a^{(\ell+1)}) \leq N$ we have
\begin{align*}
N &\geq \prod_{i=1}^\ell p_i
\intertext{and so it can be concluded that}
\log(N) &\geq \log \left( \prod_{i=1}^\ell p_i \right)\\
		&= \sum_{i=1}^\ell \log p_i \enspace .
\end{align*}
Therefore, the whole cost $\sum_{i=1}^\ell \log p_i$, as estimated in~\eqref{eq:representation cost blocks},
is $\Ocomp(\log N)$, as claimed.
\qedhere
\end{proof}

\begin{corollary}
\label{cor: G representation cost}
The cost of $G$-based representation is $\Ocomp(g + g \log N)$.
\end{corollary}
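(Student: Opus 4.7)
The plan is to split the $G$-based representation cost according to the dichotomy introduced in the previous subsection: blocks that are powers (cost charged to a rule) and blocks that are not powers (cost covered by released credit). I would then bound each of the two contributions separately and sum.

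First, for the powers: by Lemma~\ref{clm:cost from power to rule} the total cost of representing powers charged to any single rule of $G$ is $\Ocomp(\log N)$. Since $G$ has at most $g$ rules (the number of nonterminals is bounded by $|G| = g$), summing over all rules gives a contribution of $\Ocomp(g \log N)$ to the $G$-based representation cost.

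Second, for the non-powers: by the second item in the $G$-based representation scheme, the representation cost of a non-power block $a^\ell$ obtained by concatenating $\ell - k$ explicit letters $a$ to a previously represented block $a^k$ is $\ell - k + 1$, and this is paid entirely from the $2(\ell-k)$ credit released when those $\ell - k$ explicit letters disappear from $G$. Hence the total non-power cost is bounded by the total credit ever available during the run of \algmain. That budget consists of the initial $2g$ credit attached to the $g$ explicit letters of $G$, plus all credit issued afterwards. By Corollary~\ref{cor: pair compression cost} and Corollary~\ref{cor: block compression cost}, the credit issued by \algpaircr{} and \algblocksc{} summed over all phases is $\Ocomp(g \log N)$ (using $m \leq g$), so the non-power contribution is $\Ocomp(g + g \log N)$.

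Adding the two contributions yields $\Ocomp(g \log N) + \Ocomp(g + g \log N) = \Ocomp(g + g \log N)$, as claimed. There is no real obstacle here beyond careful accounting — in particular making sure that the credit released by an explicit letter $a$ used to build a non-power block $a^\ell$ is not double-counted (it is not, because by construction in item~\ref{credit} of the $G$-based representation that credit is earmarked for this purpose and is not used by \algpaircr{} or elsewhere), and that the bound $m \leq g$ justifies replacing the per-phase $\Ocomp(m)$ bounds from Lemmas~\ref{lem:pc crossing} and~\ref{lem:blocksc} by $\Ocomp(g)$.
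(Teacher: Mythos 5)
Your proposal is correct and follows essentially the same route as the paper's own proof: powers are bounded by $\Ocomp(\log N)$ per rule via Lemma~\ref{clm:cost from power to rule} with $m \leq g$ rules, while non-powers are paid from the initial $2g$ credit plus the $\Ocomp(g \log N)$ credit issued by Corollaries~\ref{cor: pair compression cost} and~\ref{cor: block compression cost}. The extra remarks on avoiding double-counting are a fine clarification but do not change the argument.
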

\begin{proof}
Concerning the cost of representing powers, by Lemma~\ref{clm:cost from power to rule}
we redirect at most $\Ocomp(\log N)$ against each of the $m \leq g$ rules of $G$.
The cost of representing non-powers is covered by the released credit;
the initial value of credit is at most $2g$ and by Corollary~\ref{cor: pair compression cost}
and Corollary~\ref{cor: block compression cost} at most $\Ocomp(g \log N)$ credit is issued during the whole run of \algmain,
which ends the proof.
\qedhere
\end{proof}

\subsubsection{Comparing the $G$-based representation cost  and \algmain-based representation cost}
We now show that the cost of \algmain-based representation
is at most as high as $G$-based one.
We first represent $G$-based representation cost using a weighted graph $\mathcal G_G$,
such that the $G$-based representation is (up to a constant factor) $w(\mathcal G_G)$,
i.e.\ the sum of weights of edges of $\mathcal G_G$.

\begin{lemma}
	\label{lem: graph representation}
The cost of $G$-based representation of all blocks is $\Theta(w(\mathcal G_G))$,
where nodes of $\mathcal G_G$ are labelled with blocks represented in the $G$-based representation
and edge from $a^\ell$ to $a^k$, where $\ell > k$, has weight $\ell - k$ or $1 + \log(\ell - k)$ (in this case additionally $k = 0$).
Each node has at least one outgoing edge.

The former corresponds to the representation cost covered by the released credit while the latter to the cost of representing powers.
\end{lemma}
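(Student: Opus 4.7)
The plan is to construct $\mathcal G_G$ directly from the two cases of the $G$-based representation scheme laid out in the previous subsection, and then compare the resulting total weight to the total representation cost edge by edge. For the vertex set I would take one node per block $a^\ell$ actually represented in the $G$-based representation, together with an auxiliary sink $\epsilon$ that serves as the common target of power-edges and of non-power-edges for which the block $a^k$ supplied by Lemma~\ref{lem: not powers} happens to be empty.

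For each represented block $a^\ell$ I would add exactly one outgoing edge, according to the case in which it is represented. If $a^\ell$ is a power, its representation is the binary expansion of cost $1+\log\ell$; I put the edge $a^\ell\to\epsilon$ of weight $1+\log\ell$, which matches the second type of weight in the statement with $k=0$ and $\ell-k=\ell$. If $a^\ell$ is not a power, Lemma~\ref{lem: not powers} gives a block $a^k$ with $0\le k<\ell$ which is either empty or is itself represented in the $G$-based representation, and the representation $a_\ell \to a_k a\cdots a$ costs $\ell-k+1$; I put the edge $a^\ell\to a^k$ of weight $\ell-k$. By construction every node carrying a nontrivial block has exactly one outgoing edge, and each represented block contributes to $w(\mathcal G_G)$ exactly once, avoiding double-counting.

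It then remains to compare $w(\mathcal G_G)$ with the total $G$-based representation cost. A power-edge contributes weight $1+\log\ell$, matching the binary-expansion cost exactly; this pins down the correspondence between the second type of edge and the cost of representing powers claimed in the last sentence of the lemma. A non-power edge contributes weight $\ell-k$, whereas the corresponding representation costs $\ell-k+1$; since $\ell>k$ forces $\ell-k\ge 1$, the two quantities agree up to a factor of two. Summing over all edges yields
$$w(\mathcal G_G)\;\le\;\text{total representation cost}\;\le\;2\,w(\mathcal G_G),$$
which is the claimed $\Theta$, and by construction the first type of edge is in bijection with the non-power contribution, i.e.\ the part of the cost covered by released credit.

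There is no real obstacle here: the statement is essentially a graph-theoretic restatement of the representation scheme from the previous subsection. The only points needing care are that the target block $a^k$ obtained from Lemma~\ref{lem: not powers} always lies in the vertex set (or is $\epsilon$), which is immediate from that lemma, and that the discipline of one outgoing edge per node prevents any block from being charged twice to the total weight.
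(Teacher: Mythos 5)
There is a genuine gap, and it sits exactly where you claim there is ``no real obstacle'': your rule of \emph{exactly one outgoing edge per node}. The $G$-based representation is defined per \emph{occurrence} of a block in the rules: a representation event is created for each $a$-block of each rule $X_i\to\alpha_i$ at the moment \algremblocks{} has processed $X_1,\dots,X_{i-1}$, and its cost is paid locally (either by the credit of the $\ell-k$ explicit letters of \emph{that} rule, or charged as a power to \emph{that} rule). Consequently the same letter $a_\ell$ may be paid for several times, and may even be a power in one rule while being obtained from some $a^k$ by appending explicit letters in another. The paper's graph therefore puts one outgoing edge per representation event, so a node can have many outgoing edges (this is why the lemma says ``at least one outgoing edge'', and why the subsequent transformation to $\mathcal G_\algmain$ begins by deleting all but one outgoing edge per node). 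With your deduplicated graph the upper bound ``total representation cost $\le 2\,w(\mathcal G_G)$'' is false in general: if, say, a power of length $\ell$ is created in $r$ different rules, the $G$-based cost includes $r(1+\log\ell)$ while your graph carries a single edge of weight $1+\log\ell$, so the claimed $\Theta$ can be off by an unbounded factor. In addition, your phrase ``according to the case in which it is represented'' presupposes that each block falls into exactly one case, which, as above, need not hold.

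Apart from this, your edge weights and the matching of edge types to the two cost sources agree with the paper (power edges of weight $1+\log\ell$ to $\epsilon$, non-power edges of weight $\ell-k$ justified via Lemma~\ref{lem: not powers}, and the cost-versus-weight comparison up to the additive $+1$ absorbed by $\ell-k\ge 1$). The fix is simply to drop the one-edge-per-node discipline and create one edge per representation event; summing weights then reproduces the $G$-based cost up to a constant factor in both directions, which is all the lemma asserts.
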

\begin{proof}
We give a construction of the graph $\mathcal G_G$.

Fix the letter $a$ and consider any of the blocks $a^\ell$ that is represented by $G$,
we put a node $a^\ell$ in $\mathcal G_G$.
Note that a single $a^\ell$ may be represented in many ways: different occurrences of $a^\ell$
are replaced with $a_\ell$ and may be represented in different ways (or even twice in the same way),
this means that $\mathcal G_G$ may have more than one outgoing edge per node. 
\begin{itemize}
	\item when $a^\ell$ is a power, we create an edge from the node labelled with $a^\ell$ to $\epsilon$,
 the weight is $1 + \log \ell$ (recall that this is the cost of representing this power);
	\item when $a_\ell$ is represented as a concatenation of $\ell - k$ letters to $a_k$,
	we create and edge from the node $a^\ell$ to $a^k$, the weight is $\ell - k$ (this is the cost of representing this block;
	it was paid by the credit on the $\ell - k$ explicit letters $a$).
\end{itemize}
Then the sum of the weight of the created graph is a cost of representing the blocks using the $G$-based representation
(up to a constant factor).
\qedhere
\end{proof}

Similarly, the cost of $\algmain$-based representation has a graph representation $\mathcal G_\algmain$.

\begin{lemma}
\label{lem: algmain representation cost}
The cost of \algmain-representation for blocks of a letter $a$ is $\Theta(w(\mathcal G_\algmain))$,
where the nodes of $\mathcal G_\algmain$ are labelled with blocks represented by \algmain-representation and it has an edge from $a^\ell$ to $a^k$
if and only if $\ell$ and $k$ are two consecutive lengths of $a$-blocks.
Such an edge has weight $1 + \log (\ell - k)$.
\end{lemma}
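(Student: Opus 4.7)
The plan is to read off both the upper and lower bounds on the \algmain-representation cost directly from the scheme that \algmain{} uses to represent the blocks of $a$, which is exactly the one from Lemma~\ref{lem: cost of powers}. Fix a letter $a$ and let $\ell_1 < \ell_2 < \dots < \ell_k$ be the sorted list of lengths of $a$-blocks that \algmain{} compresses, with the convention $\ell_0 = 0$. By the definition of $\mathcal{G}_\algmain$ its edges form a path $a^{\ell_k}\to a^{\ell_{k-1}}\to\cdots\to a^{\ell_1}$ (optionally extended by a final edge to account for $\ell_1$ itself), and the weight of each edge is $1+\log(\ell_i-\ell_{i-1})$, so
\[
w(\mathcal{G}_\algmain) \;=\; \Theta\!\Bigl(\sum_{i=1}^k \bigl[1+\log(\ell_i-\ell_{i-1})\bigr]\Bigr).
\]

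For the upper bound, Lemma~\ref{lem: cost of powers} provides an explicit subgrammar for $a_{\ell_1},\ldots,a_{\ell_k}$ of total representation cost $\Ocomp\bigl(\sum_{i=1}^k[1+\log(\ell_i-\ell_{i-1})]\bigr)$, which is precisely $\Ocomp(w(\mathcal{G}_\algmain))$, and this is exactly the representation that \algmain{} employs. For the matching lower bound I would inspect the same construction: every $a_{\ell_i}$ is defined by a rule of body length at least two, and each difference block $a_{\ell_i-\ell_{i-1}}$ is expressed as a concatenation of $\Omega(\log(\ell_i-\ell_{i-1}))$ doubling letters $a_{2^j}$ (as dictated by its binary expansion), contributing at least $\Omega(1+\log(\ell_i-\ell_{i-1}))$ to the representation cost per index $i$. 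Summing over $i$ gives $\Omega(w(\mathcal{G}_\algmain))$, so the two quantities are $\Theta$-equivalent.

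The main delicate point I foresee is the shared overhead $\Ocomp(\log\ell)$, where $\ell=\max_i(\ell_i-\ell_{i-1})$, that \algmain{} pays once to build the doubling ladder $a_2, a_4, \ldots, a_{2^{\lfloor\log\ell\rfloor}}$; this cost does not sit on any single edge of $\mathcal{G}_\algmain$. I would absorb it by the bound $\log\ell \leq \sum_{i=1}^k[1+\log(\ell_i-\ell_{i-1})]$ (which follows from $\ell\leq\ell_i-\ell_{i-1}$ for the maximising index), so this additive term is swallowed by the constant factor in the $\Theta$ and the proof closes.
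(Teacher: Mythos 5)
Your upper-bound direction is exactly the paper's (one-line) argument: the cost charged by Lemma~\ref{lem: cost of powers} is $\Ocomp\bigl(\sum_{i=1}^k[1+\log(\ell_i-\ell_{i-1})]\bigr)$, the shared ladder overhead $\Ocomp(\log\ell)$ is absorbed exactly as you say, and these per-block charges are precisely the edge weights of $\mathcal G_\algmain$. The gap is in your lower bound. The difference block $a_{\ell_i-\ell_{i-1}}$ is written as a concatenation of as many ladder letters $a_{2^j}$ as there are ones in the binary expansion of $\ell_i-\ell_{i-1}$; this can be a single letter (whenever the difference is a power of two), so the per-index contribution is \emph{not} $\Omega(1+\log(\ell_i-\ell_{i-1}))$. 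In fact the subgrammar actually built can be of size $o(w(\mathcal G_\algmain))$: if all $k$ differences equal $2^j$, the construction of Lemma~\ref{lem: cost of powers} costs $\Ocomp(k+j)$ (one ladder of length $j$ plus one length-two rule per block), while $w(\mathcal G_\algmain)=\Theta(k(1+j))$. So the claimed $\Omega(w(\mathcal G_\algmain))$ bound on the literal size of the constructed grammar is false.

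The paper sidesteps this by reading ``\algmain-representation cost'' as the cost \emph{accounted} by Lemma~\ref{lem: cost of powers}, i.e.\ the charge $1+\log(\ell_i-\ell_{i-1})$ assigned to each block $a^{\ell_i}$, which by construction coincides up to constants with $w(\mathcal G_\algmain)$; its proof is pure bookkeeping. Moreover only the direction you did establish is used later: Lemma~\ref{lem: transforming representations} and Corollary~\ref{cor: algmain representation cost} need the actual cost to be $\Ocomp(w(\mathcal G_\algmain))$, which your first part gives. So your proposal is sound where it matters, but the matching lower bound should either be dropped or restated as a claim about the accounted cost rather than about the subgrammar that \algmain{} actually outputs.
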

\begin{proof}
Observe that this is a straightforward consequence of the way the blocks are represented:
Lemma~\ref{lem: cost of powers} guarantees that when blocks $a^{\ell_1}, a^{\ell_2}, \ldots, a^{\ell_k}$
(where $1 < \ell_1 < \ell_2< \dots < \ell_k$)
are represented the \algmain-representation cost is $ \Ocomp(\sum_{i=1}^k[1 + \log(\ell_{i} - \ell_{i-1})])$,
so we can assign cost $1 + \log (\ell_{i} - \ell_{i-1})$ to $a^{\ell_i}$ (and make it the weight on the edge to the previous block).
\qedhere
\end{proof}

We now show that $\mathcal G_G$ can be transformed to $\mathcal G_\algmain$ without increasing the sum of weights of the edges.

\begin{lemma}
	\label{lem: transforming representations}
$\mathcal G_G$ can be transformed to $\mathcal G_\algmain$ without increasing the sum of weights of the edges.
\end{lemma}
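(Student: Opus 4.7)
The plan is to decompose both graphs letter by letter and rewire the edges of $\mathcal G_G$ one node at a time. Fix a letter $a$ and let $\ell_1 < \ell_2 < \cdots < \ell_k$ enumerate the lengths of the $a$-blocks that are represented, with $\ell_0 = 0$ corresponding to $\epsilon$. Since both the $G$-based and the \algmain-based schemes are built for exactly the same set of blocks produced during \algblocksc{} (cf.\ Lemma~\ref{lem:blocksc} and Lemma~\ref{lem: cost of powers}), the two graphs share the same $a$-nodes, and the $a$-edges of $\mathcal G_\algmain$ are exactly $\ell_i \to \ell_{i-1}$ of weight $1 + \log(\ell_i - \ell_{i-1})$ for $i = 1, \ldots, k$.

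First I would select, at each $a$-node $\ell_i$ with $i \geq 1$, one outgoing edge of $\mathcal G_G$ (which exists by Lemma~\ref{lem: graph representation}) and discard all the others; this step alone can only decrease the total edge weight. Then I would rewire each surviving edge so that it points from $\ell_i$ to $\ell_{i-1}$ with weight $1 + \log(\ell_i - \ell_{i-1})$, leaving a graph whose $a$-edges match those of $\mathcal G_\algmain$ exactly.

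The weight accounting splits into two cases depending on which type of edge out of $\ell_i$ survived. If it was a power edge $\ell_i \to \epsilon$ of weight $1 + \log \ell_i$, then $\ell_i - \ell_{i-1} \leq \ell_i$ combined with monotonicity of $\log$ gives $1 + \log(\ell_i - \ell_{i-1}) \leq 1 + \log \ell_i$. If it was a non-power edge $\ell_i \to \ell_j$ of weight $\ell_i - \ell_j$ for some $j < i$, then $1 + \log(\ell_i - \ell_{i-1}) \leq 1 + \log(\ell_i - \ell_j)$ by monotonicity, and the elementary inequality $1 + \log m \leq m$ for every positive integer $m$ (equivalent to $2m \leq 2^m$, immediate by induction) yields $1 + \log(\ell_i - \ell_j) \leq \ell_i - \ell_j$. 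Either way, the rewiring does not increase the weight of that edge.

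Performing these operations for every letter $a$ converts $\mathcal G_G$ into $\mathcal G_\algmain$ while only decreasing the total edge weight. I expect the only nontrivial point to be a careful identification of the two node sets (so that $\ell_{i-1}$ is well-defined as the immediate predecessor of $\ell_i$ in both graphs); once that is in place, the weight comparisons collapse to the elementary estimate $1 + \log m \leq m$ for integer $m \geq 1$ together with monotonicity of $\log$.
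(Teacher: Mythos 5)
Your rewiring step and the weight comparisons ($1+\log(\ell_i-\ell_{i-1})\leq 1+\log\ell_i$ for a power edge, and $1+\log(\ell_i-\ell_j)\leq \ell_i-\ell_j$ for a credit edge) are exactly the first half of the paper's argument. The genuine gap is the claim you base everything on and then flag as ``the only nontrivial point'': that the two graphs share the same node set. This is false in general, in both directions of interest. The nodes of $\mathcal G_G$ are the $a$-blocks appearing explicitly in a rule $X_i\to\alpha_i$ at the intermediate moment when \algremblocks{} has popped from $X_1,\ldots,X_{i-1}$ but not yet from $X_i$; such a block can later be popped and merged with neighbouring $a$'s at a higher level, so its length need not occur as the length of any maximal block of \mytext. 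For instance with $X_1\to aa$, $X_2\to aX_1b$ the $G$-based representation contains a node $a^2$ while \algmain{} only ever represents $a^3$. So after your edge selection and rewiring you are left with surplus nodes (and their outgoing edges) that have no counterpart in $\mathcal G_\algmain$, and your ``immediate predecessor'' $\ell_{i-1}$ computed inside $\mathcal G_G$ need not be the predecessor length in $\mathcal G_\algmain$. The paper closes this by a contraction step: each node of $\mathcal G_G$ not present in $\mathcal G_\algmain$ is deleted and its unique incoming edge is redirected to the node's predecessor, the two weights being merged via the inequality $1+\log x+1+\log y> 1+\log(x+y)$ for $x,y\geq 1$, which guarantees the total weight does not grow. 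Without this step (or an argument that no such surplus nodes exist, which is not available) the transformation does not terminate at $\mathcal G_\algmain$.

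The opposite containment, which you dispose of by citing Lemma~\ref{lem:blocksc} and Lemma~\ref{lem: cost of powers}, also needs an actual argument: one must show that every maximal block of \mytext{} (i.e.\ every node of $\mathcal G_\algmain$) is represented in the $G$-based scheme, so that the target node is there to rewire onto. Neither cited lemma says this; the paper derives it from Lemma~\ref{lem: no crossing blocks}: after \algremblocks{} there are no crossing blocks, hence every maximal block of \mytext{} occurs as an explicit maximal block in some rule, and that block was already present just before \algremblocks{} processed that rule, so it received a node in $\mathcal G_G$. Incorporating these two missing pieces would turn your sketch into the paper's proof; as written, the reduction to ``$1+\log m\leq m$ plus monotonicity'' skips the part of the lemma that actually requires work.
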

\begin{proof}
Fix a letter $a$, we show how to transform the subgraph of $\mathcal G_G$ induced by nodes labelled with blocks of $a$
to the corresponding subgraph of $\mathcal G_\algmain$, without increasing the sum of weights.

Firstly, let us sort the nodes according to the increasing length of the blocks.
For each node $a^\ell$, if it has many edges, we delete all except one and then we redirect this edge to $a^\ell$'s direct predecessor
(say $a^k$) and label it with a cost $1 + \log (\ell-k)$. This cannot increase the sum of weights of edges:
\begin{itemize}
	\item deleting does not increase the sum of weights;
	\item if $a_\ell$ has an edge to $\epsilon$ with weight $1 + \log \ell$ then $1 + \log \ell \geq 1 + \log (\ell - k)$;
	\item otherwise it had an edge to some $k' \leq k$ with a weight $\ell - k'$. Then
	$1 + \log (\ell - k) \leq \ell - k \leq \ell - k'$, as claimed (note that $1 + \log x \leq x$ for $x \geq 1$). 
\end{itemize}
Some blocks labelling nodes in $\mathcal G_G$ perhaps do not label the nodes in $\mathcal G_\algmain$.
For such a block $a^\ell$ we remove its node $a_\ell$ and redirect its unique incoming edge to its predecessor,
say $a_{\ell'}$, changing the weight appropriately.
Since $1 + \log (x) + 1 + \log(y) > 1 + \log(x+y)$ when $x, y \geq 1$, we do not increase the total weight.

It is left to observe that if a node labelled with $a^\ell$ exists in $\mathcal G_\algmain$ then it also exists
in $\mathcal G_G$, i.e.\ 
all blocks represented in \algmain{} occur in \mytext.
After \algremblocks{} there are no crossing blocks, see Lemma~\ref{lem: no crossing blocks}.
So any maximal block in \mytext{} (i.e.\ one represented by \algmain-based representation)
is also a maximal block $a^\ell$ in some rule (after \algremblocks), say in $X_i$.
But then this block is present in $X_i$ also just before action of \algremblocks{} on $X_i$ and so it is represented by $G$-based representation.

In this way we obtained a graph corresponding to the \algmain-based representation.
\end{proof}

\begin{corollary}
\label{cor: algmain representation cost}
The total cost of \algmain-representation is $\Ocomp(g \log N)$.
\end{corollary}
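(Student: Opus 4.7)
The plan is to chain together the four results immediately preceding the corollary. By Corollary~\ref{cor: G representation cost}, the cost of the $G$-based representation is already known to be $\Ocomp(g + g \log N) = \Ocomp(g \log N)$, so all that remains is to transfer this bound to the \algmain-based representation, which is what the algorithm actually pays.

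First, I would rewrite both representation costs in their graph forms: the $G$-based cost equals $\Theta(w(\mathcal G_G))$ by Lemma~\ref{lem: graph representation}, and the \algmain-based cost equals $\Theta(w(\mathcal G_\algmain))$ by Lemma~\ref{lem: algmain representation cost}. Next, I would invoke Lemma~\ref{lem: transforming representations}, which gives a sequence of local rewrites on $\mathcal G_G$ that yields $\mathcal G_\algmain$ and never increases the total edge weight. This immediately gives the inequality $w(\mathcal G_\algmain) \leq w(\mathcal G_G)$.

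Putting these together, the cost of the \algmain-based representation is $\Theta(w(\mathcal G_\algmain)) \leq \Theta(w(\mathcal G_G)) = \Ocomp(g \log N)$, which is the desired bound. Summing over all letters $a$ occurring as the base of some block is harmless since the graph $\mathcal G_G$ (and also $\mathcal G_\algmain$) is implicitly a disjoint union over such letters, and the bound on $w(\mathcal G_G)$ from Corollary~\ref{cor: G representation cost} is already a global bound. There is essentially no obstacle left at this stage: all the nontrivial combinatorial and weight-counting work has already been absorbed into Lemma~\ref{clm:cost from power to rule} (for the $G$-side bound) and into Lemma~\ref{lem: transforming representations} (for comparing the two representations), so the corollary is obtained by a one-line concatenation of the two inequalities.
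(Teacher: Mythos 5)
Your proposal is correct and follows essentially the same route as the paper: the paper's own proof simply invokes Lemma~\ref{lem: transforming representations} to reduce the claim to the $G$-based representation and then cites Corollary~\ref{cor: G representation cost}, exactly the chain you spell out (with the graph-weight formulations of Lemmas~\ref{lem: graph representation} and~\ref{lem: algmain representation cost} made explicit). No gaps; your version is just a slightly more detailed write-up of the same one-line argument.
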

\begin{proof}
By Lemma~\ref{lem: transforming representations} it is enough to show this for the $G$-based representation,
which holds by Corollary~\ref{cor: G representation cost} 
\qedhere
\end{proof}

\section{Improved algorithm and its analysis}
The naive algorithm, which simply represents the input word $w$ as $X_1 \to w$ results in a grammar of size $N$.
In some extreme cases this might be better than $\Ocomp(g \log N)$ guaranteed by \algmain.
We merge the naive approach with the recompression-based algorithm:
if at the beginning of a phase $i$ \algmain{} already paid $k_i$ for representation of the letters
and the remaining text is $\mytext_i$
then we can construct a grammar for the input string of the total size $k_i + |\mytext_i|$ by giving a rule $X \to T_i$.
Of course we can then choose the minimum over all possible $i$
(observe that for $i = 0$  this is simply the naive representation $X \to w$ and for the last $i$
this is the grammar returned by \algmain).
We call the corresponding algorithm \algmaini.
Additionally, we show that when $|\mytext_i| \approx g$
then the so-far cost of representing letters is $\Ocomp(g \log (N / g))$
and so the corresponding grammar considered by \algmaini{} is of size $\Ocomp(g + g \log(N/g))$,
consequently, the grammar returned by \algmaini{} is also of this size.
This matches the best known results for the smallest grammar problem~\cite{SLPaprox,SLPaprox2,SLPaproxSakamoto}.

\begin{algorithm}[H]
	\caption{\algmaini: improved version outline}
	\label{alg:main improved}
	\begin{algorithmic}[1]
	\State $i \gets 0$
	\While{$|\mytext|>1$} 
		\State $\mysize[i] \gets |\mytext| + $ so-far cost of representing letters \Comment{Cost of grammar in phase $i$}
		\State $i \gets i +1$ \Comment{Number of the phase}
		\State $L \gets $ list of letters in \mytext{} \Comment{The compression is done as in \algmain}
    \For{each $a \in L$}
    		\State compress maximal blocks of $a$ 
    \EndFor				    
		\State $P \gets $ list of pairs 
		\State find partition of $\Sigma$ into $\Sigma_\ell$ and $\Sigma_r$ 
    \For{$ab\in P \cap \Sigma_\ell\Sigma_r$}
    	\State compress pair $ab$ 
    \EndFor
	\EndWhile
	\State output grammar $G_i$ for which $\mysize[i]$ is smallest
	 \end{algorithmic}
\end{algorithm}

The properties of \algmaini{} are summarised in the following theorem.

\begin{theorem}
	\label{thm: main imp}
The \algmain{} runs in linear time and returns a grammar of size $\Ocomp\left(g + g \log\left(\frac{N}{g} \right)\right)$,
where $g$ is the size of the optimal grammar for the input text.
\end{theorem}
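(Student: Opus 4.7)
The running time is immediate: per phase, \algmaini{} does the same work as \algmain{} together with a constant-time update of $\mysize[i]$, so by the linear-time theorem for \algmain{} the running time is still $\Ocomp(N)$.

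For the size bound, the plan is to locate a phase $i^\star$ at which the candidate grammar $G_{i^\star}$ has size $\Ocomp(g + g\log(N/g))$; since \algmaini{} returns the grammar minimising $\mysize[\cdot]$, this suffices. If $N \leq g$ then the naive grammar $X \to \mytext$ considered at phase $0$ already has size $N = \Ocomp(g)$, so assume $N > g$. By Lemma \ref{lem:number of phases} the length $|\mytext_i|$ drops by a constant factor per phase, so some $i^\star = \Ocomp(\log(N/g))$ is the first phase with $|\mytext_{i^\star}| \leq g$. It therefore remains to show that at this phase the so-far cost of representing letters introduced by \algmain{} is $\Ocomp(g\log(N/g))$, as then $\mysize[i^\star] = |\mytext_{i^\star}| + \Ocomp(g\log(N/g)) = \Ocomp(g + g\log(N/g))$.

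The so-far cost splits into a credit-based part and a power-representation part. For the credit, Lemmas \ref{lem:pc crossing} and \ref{lem:blocksc} show that each phase issues only $\Ocomp(m) = \Ocomp(g)$ new credit (via \algpaircr{} and \algblocksc), so together with the initial $2g$ credit the total credit after $i^\star$ phases is $\Ocomp(g + g\cdot i^\star) = \Ocomp(g \log(N/g))$; this alone already covers the credit of the newly introduced letters and the representation cost of non-powers. For the powers, we revisit Lemma \ref{clm:cost from power to rule}: rather than using the trivial bound $\Ocomp(\log N)$ per rule, we observe that the cost charged to rule $X_i$ is really $\Ocomp(\log W_i)$, where $W_i$ is the weight (i.e.\ length in the input) of the last power letter actually produced in the rule for $X_i$ up to phase $i^\star$. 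Provided one can show that these distinguished letters correspond to (essentially) disjoint stretches of the input, so that $\sum_i W_i = \Ocomp(N)$, concavity of $\log$ gives $\sum_i \log W_i \leq g \log(\sum_i W_i / g) = \Ocomp(g \log(N/g))$. Lemma \ref{lem: transforming representations} transfers the bound from the $G$-based representation to the actual \algmain-based representation used by the algorithm.

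The main technical obstacle is precisely this disjoint-charging step $\sum_i W_i = \Ocomp(N)$: the argument of Lemma \ref{clm:cost from power to rule} is pessimistic in using $W_i \leq N$ on every rule, and recovering the $\log(N/g)$ improvement requires tracing the chain of block compressions backwards to the original input and checking that the distinguished power letters of distinct rules cannot jointly account for much more than $N$ input positions — only then do $|\mytext_{i^\star}| \leq g$ and Jensen's inequality combine to yield the desired $\Ocomp(g\log(N/g))$ bound.
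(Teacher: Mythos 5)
Your reduction is sound as far as it goes: the running-time argument, the choice of the phase at which $|\mytext|$ first drops to at most $g$, the $\Ocomp(g\log(N/g))$ bound on issued credit via Lemmata~\ref{lem:pc crossing} and~\ref{lem:blocksc}, and the observation that only the cost of representing powers remains — all of this coincides with the paper's decomposition (Lemma~\ref{lem:cost of compression} together with Lemma~\ref{lem:last phase estimations}). But the proposal stops exactly where the real proof begins. You write ``provided one can show'' that the distinguished power letters of distinct rules account for essentially disjoint stretches of the input, so that $\sum_i W_i = \Ocomp(N)$, and you yourself flag this disjoint-charging step as the main obstacle without resolving it. That step is not routine, and it is the entire content of the hard part of the paper's proof of Lemma~\ref{lem:cost of compression}: a charging scheme that marks positions of the \emph{input} text ($X_i$-pre-power and $X_i$-in markings) maintained so that markings never overlap, at most two markings are associated with each rule, and the accumulated power-representation cost is dominated by $\sum_i (1+\log p_i)$ over the marked lengths (conditions (M1)--(M3)); the bound $\Ocomp(g+g\log(N/g))$ then follows from $k\leq 2g$, $\sum p_i\leq N$ and concavity, which is the Jensen step you describe.

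Why your shortcut does not substitute for this: the weight $W_i$ of the last power letter in the rule for $X_i$ is a property of a letter that may occur many times in the current text and may be generated at many places of the input, and the powers charged to a rule are created at different times, with later powers ``absorbing'' earlier ones. A direct claim that the $W_i$'s of different rules chart out disjoint input intervals is therefore not available without fixing, for each charge, a concrete occurrence (the paper always uses the rightmost occurrence of the maximal block and the smallest nonterminal generating it) and then proving that previously placed markings inside the derived string $w^\ell$ all sit in the last copy of $w$ (Claim~\ref{clm: existing marking}), so that an old marking of length $\ell'$ can be deleted and replaced by a marking of $\ell\cdot\ell'$ fresh positions without overlap, and that per rule only one pre-power and one in marking can survive (Claim~\ref{clm: markings}), the latter requiring the argument that an already existing $X_i$-in marking necessarily lies inside $w^\ell$ and can be chosen as the one removed. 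Without this bookkeeping your inequality $\sum_i W_i = \Ocomp(N)$ is unproven, and with it the per-rule refinement of Lemma~\ref{clm:cost from power to rule} that you invoke is no longer needed in that form. So the proposal identifies the right target but leaves a genuine gap precisely at the step that distinguishes the $\Ocomp(g+g\log(N/g))$ bound from the easy $\Ocomp(g\log N)$ bound.
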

The time analysis follows in the same way as in the case of \algmain{} (the only additional computation is storing the sizes and choosing the minimum of them),
so it is omitted. In the rest of this section we show the bound on the size of the returned grammar.

In the following analysis we focus on the phase $i$ such that $T_{i} \geq g > T_{i+1}$
(for input text with more than one symbol such an $i$ exists, as for the `last' $i$ we have $T_i = 1$).
Then we separately estimate the cost of representation
(i.e.\ issued credit and the cost of $\algmain$-based representation)
up to phase $i$ and in the phase $i+1$.
We show that both of those are $\Ocomp(g + g \log (N/g))$, which shows the main claim of Theorem~\ref{thm: main imp}.

\begin{lemma}
\label{lem:cost of compression}
If at the beginning of the phase $|\mytext| \geq g$ then so far the cost of representing letters by \algmaini{}
as well as the credit on $G$ is $\Ocomp(g + g \log(N/g))$.
\end{lemma}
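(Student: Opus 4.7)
First, I would bound the number of phases so far. By Lemma~\ref{lem:number of phases} $|\mytext|$ shrinks by a constant factor per phase, so the assumption $|\mytext|\geq g$ implies that at most $\Ocomp(\log(N/g))$ phases have elapsed. In every phase \algpaircr{} and \algblocksc{} each issue $\Ocomp(m)=\Ocomp(g)$ credit (Lemmas~\ref{lem:pc crossing} and~\ref{lem:blocksc}), so the total credit issued is $\Ocomp(g\log(N/g))$. Together with the initial $\Ocomp(g)$ credit on the explicit letters of $G$, this bounds the credit remaining on $G$ and absorbs every representation cost paid by released credit, namely the pair-compression costs (Lemma~\ref{lem:noncrossing compression}) and the non-power block costs described in the second bullet after Lemma~\ref{lem: not powers}.

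The only remaining piece is the representation cost of power blocks. By Lemma~\ref{lem: transforming representations} it suffices to bound this in the $G$-based representation. Using the per-rule decomposition of Lemma~\ref{clm:cost from power to rule}, this cost is at most $\sum_j (\ell_j + \log W_j)$, together with one $\Ocomp(\log N)$ charge per ``case~1'' rule, where $\ell_j$ counts the case~2 power creations in rule $X_j$ and $W_j := \prod_s p_{j,s}$ is the product of the corresponding block sizes. Since at most one power per rule is created per phase, $\sum_j \ell_j \leq \Ocomp(g\log(N/g))$; the case~1 charges can be folded into the $\log W_j$ sum by using the weight of the just-created power letter as the witness.

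It remains to show $\sum_j \log W_j = \Ocomp(g \log(N/g))$, which is the heart of the argument. Lemma~\ref{clm:cost from power to rule} supplies $W_j \leq \weight(a_j^{(\ell+1)})$, where $a_j^{(\ell+1)}$ is the last power letter, created as a fresh explicit letter in $X_j$'s rule; tracing this letter through subsequent pair and block compressions yields a descendant letter $b_j$ with $\weight(b_j) \geq W_j$ that appears as an explicit letter somewhere in $G$ at phase $i$, and the $b_j$'s are distinct occurrences because the power letters were fresh at creation. A short computation, using that every nonterminal of $G$ is used at least once in the full expansion to the input of length $N$, shows that the total weight of all explicit letter occurrences in the rules of the current $G$ is at most $N$. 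Thus $\sum_j \weight(b_j) \leq N$, and concavity of $\log$ over $m \leq g$ positive terms of total mass at most $N$ gives $\sum_j \log W_j \leq g \log(N/g)$, which completes the proof.

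The main obstacle is the witness-tracking step: one must verify that the descendant $b_j$ can always be located as an explicit letter somewhere in $G$ at phase $i$ and that distinctness of the witnesses across rules is preserved. The delicate cases are (i) pair compression, which could in principle merge two distinct witnesses into a single letter, and (ii) the case~1 scenario, where a nonterminal of $X_j$ has been removed and the descendant may have been popped outward via \algremblocks{} into an enclosing rule; both are handled by exploiting the freshness of the power letters and the fact that subsequent compressions and popping preserve their descendants as explicit letters somewhere in the grammar, with weight only nondecreasing along each lineage.
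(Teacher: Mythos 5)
The easy parts of your proposal are sound and coincide with the paper's route: the $\Ocomp(\log(N/g))$ bound on the number of phases together with Lemmata~\ref{lem:pc crossing} and~\ref{lem:blocksc} gives the credit bound, the reduction to the $G$-based cost of powers goes through Lemmata~\ref{lem: graph representation}--\ref{lem: transforming representations}, your bound $\sum_j \ell_j = \Ocomp(g\log(N/g))$ via ``one power per rule per phase'' is a legitimate (even slightly simpler) way to handle the additive $+1$ terms, the observation that the total weight of explicit letter occurrences in the current $G$ is at most $N$ is correct, and the closing concave optimization is exactly the computation in~\eqref{eq:estimations}.

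The gap is in the witness-tracking step, which you yourself flag as the main obstacle: the claim that the witnesses $b_j$ remain \emph{distinct} occurrences ``because the power letters were fresh at creation'' is not preserved as the grammar evolves. Once a rule has lost both its nonterminals, its frozen witness sits among explicit letters that \algremblocks{} may later pop into an enclosing rule $X_i$ as part of a popped prefix or suffix; the subsequent block (or pair) compression in $X_i$'s rule can then fuse it with the witness of $X_i$, or of another nested rule, into a \emph{single} explicit occurrence --- indeed this merged letter may simultaneously be the new power charged to $X_i$. At that point $\sum_j \weight(b_j)$ counts one occurrence several times, so $\sum_j \weight(b_j)\le N$ no longer follows from ``total explicit weight at most $N$'', and the concavity step collapses; freshness at creation does not rule this out, since distinct fresh occurrences can share a descendant occurrence. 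The scenario can be repaired, e.g.\ by splitting the weight of a merged occurrence into frozen shares (the merged letter's weight is at least the sum of the merged witnesses' weights), but organizing this so that each nonterminal carries only $\Ocomp(1)$ live charges and no charge is counted twice when charges nest or fuse is precisely the content of the paper's marking scheme: the invariants \Mrefall, Claims~\ref{clm: markings} and~\ref{clm: existing marking}, and the re-marking step that removes an old marking of length $\ell'$ inside the derived string $w^\ell$ and marks $\ell\cdot\ell'$ letters (including the argument that a pre-existing $X_i$-in marking must lie inside $w^\ell$, so \Mref{2} survives). This bookkeeping is the substance of the lemma's proof, not a routine verification, so as it stands your argument leaves the crucial step unproven.
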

\begin{proof}

We estimate separately the amount of issued credit and the cost of representation of letters replacing blocks.
This covers the whole cost of representing letters
(see Corollary~\ref{cor: pair compression cost}, Corollary~\ref{cor: block compression cost})
as well as the credit on the letters in the grammar.

\subsubsection*{Credit}
Observe first that initial grammar $G$ has at most $g$ credit.
The input text is of length $N$ and the current one is of $t = |\mytext|$
and so there were $\Ocomp(\log(N/t))$ phases, as in each phase the length of $\mytext$
drops by a constant factor, see Lemma~\ref{lem:number of phases}.
As $t \geq g$, we obtain a bound $\Ocomp(\log(N/g))$ on the number of phases.
Due to Lemmata~\ref{lem:pc crossing}, \ref{lem:blocksc}, at most $\Ocomp(m)$ credit per phase is issued
during the pair compression and block compression,
so in total $\Ocomp(g + g \log(N/g))$ credit was issued.
From Corollary~\ref{cor: pair compression cost} and Corollary~\ref{cor: block compression cost}
we conclude that this credit is enough to cover
the credit of all letters as well as the representation cost of letters introduced during the pair compression.
So it is left to calculate the cost of representing blocks.

\subsubsection*{Representing blocks}
The representation of blocks used by \algmaini{} is the same as the one of \algmain.
So we can define the $G$-based representation in the same way as previously.
For both the $G$-based representation and the $\algmaini$-based representation
we can define graphs $\mathcal G_G$ and $\mathcal G_\algmaini$
and by Lemma~\ref{lem: graph representation} the cost of $G$-based representation is $\Theta(w(\mathcal G_G))$
and by Lemma~\ref{lem: algmain representation cost} the cost of \algmaini-based representation is $\Theta(w(\mathcal G_\algmaini))$.
Then Lemma~\ref{lem: transforming representations} shows that we can transform $\mathcal G_G$ to $\mathcal G_\algmaini$ without increasing
the sum of weights.
Hence it is enough to show that the $G$-based representation cost is at most $\Ocomp(g + \log (N/g))$.

The $G$-based representation cost consists of some released credit and the cost of representing powers,
see Lemma~\ref{lem: graph representation}.
The former was already addressed (the whole issued credit is $\Ocomp(g + g \log(N/g))$)
and so it is enough to estimate the latter, i.e.\ the cost of representing powers.

The outline of the analysis is as follows:
when a new power $a^\ell$ is represented,
we mark some letters of the input text (and perhaps modify some other markings)
those markings are associated with nonterminals
and are named $X_i$-pre-power marking and $X_i$-in marking
(which are defined in more detail later on).
The markings satisfy the following conditions:
\begin{enumerate}[(M1)]
	\item \label{M1} each marking marks at least $2$ letters, no two markings mark the same letter;
	\item \label{M2} for each $X_i$ there is most one $X_i$-pre-power marking and at most one $X_i$-in marking;
	\item \label{M3} 
	when the substrings of length $p_1$, $p_2$, \ldots, $p_k$ are marked,
then the so-far cost of representing the powers by $G$-based representation is $c \sum_{i = 1}^{k} (1 + \log p_i)$
(for some fixed constant c).
\end{enumerate}

Using \Mrefall{} the total cost of representing powers (in $G$-based representation) can be upper-bounded by
(a constant times):
\begin{subequations}
\label{eq:estimations}
\begin{equation}
\label{eq:estimations 1}
k + \sum_{i = 1}^{k} \log p_i, \text{ where } k \leq 2m \text{ and } \sum_{i = 1}^{k} p_i \leq N \enspace .
\end{equation}
It is easy to show that~\eqref{eq:estimations 1} is maximised for $k = 2m$ and each $p_i$ equal to $N / 2m$:
clearly, the sum is maximised for $\sum_{i = 1}^{k} p_i = N$.
Then for a fixed $k$ and $\sum_{i = 1}^{k} p_i = N$
the sum $\sum_{i = 1}^{k} \log p_i$ is maximised when all $p_i$ are equal, which follows from
the fact that $\log(x)$ is concave, hence we can set $p_i = \frac{N}{k}$.
Lastly, the $k + k \log (N / k)$ has a non-negative derivative (for $k$) and so (weakly) increases
with $k$. Since $k \leq 2m \leq 2g$, this is maximised for $k = 2g$.
In this way the value of~\eqref{eq:estimations 1} is at most
\begin{equation}
\label{eq:estimations final}
2g + 2g \log \left(\frac{N}{2g}\right) = \Ocomp\left(g + g \log \left(\frac{N}{g} \right)\right) \enspace .
\end{equation}
\end{subequations}

The idea of preserving~\Mrefall{} is as follows: if a new power of length $\ell$ is represented,
this yields a~cost $\Ocomp(1 + \log \ell) = \Ocomp(\log \ell)$, see Lemma~\ref{lem: graph representation};
we can choose $c$ in \Mref{3} so that this is at most $c \log \ell$ (as $\ell \geq 2)$.
Then either we mark new $\ell$ letters or we remove some marking of length $\ell'$ and mark $\ell \cdot \ell'$ letters,
it is easy to see that in this way~\Mrefall{} is preserved (still, those details are repeated later in the proof).

Whenever we are to represent powers $a^{\ell_1}$, $a^{\ell_2}$, \ldots, for each power $a^\ell$,
where $\ell > 1$, we find the right-most maximal block $a^\ell$ in \mytext.
It is possible that this particular $a^\ell$ was obtained as a concatenation of $\ell - k$ explicit letters to $a^k$
(so, not as a power). In such a case we are lucky, as the representation of this $a_\ell$ is paid by the credit
and we do not need to separately consider the cost of representing power $a^\ell$.
Otherwise the $a^\ell$ in this rule is obtained as a power and we mark some of the letters in the input that are `derived' by this $a^\ell$.
The type of marking depends on the way this particular $a^\ell$ is `derived':
Let $X_i$ be the smallest nonterminal that derives (before \algremblocks)
this right-most occurrence of maximal $a^\ell$ (clearly there is such non-terminal, as $X_m$ derives it).
If one of the nonterminals in $X_i$'s production was removed during \algremblocks,
this marking is an \emph{$X_i$-pre-power marking}.
Otherwise, this marking is an \emph{$X_i$-in marking}.

\begin{clm}
\label{clm: markings}
There is at most one $X_i$-pre-power marking.

When $X_i$-in marking is created for $a^\ell$, after the block compression $X_i$ has two nonterminals inside its rule
and between them there is exactly $a_\ell$.
\end{clm}

\begin{proof}
Concerning the $X_i$-pre-power marking, let $a^\ell$ be the first power that gets this marking.
Then be definition of the marking, afterwards in the rule for $X_i$ there is only one nonterminal.
But this means that no power can be created in this rule later on,
in particular, no new marking associated with $X_i$ (pre-power marking or in marking) can be created.

Suppose that $a^\ell$ was assigned an $X_i$-in marking, which as in the previous case means
that the right-most occurrence of maximal block $a^\ell$ is generated by $X_i$ but not by the
nonterminals in the rule for $X_i$.
Since $a^\ell$ is a power it is obtained in the rule as a concatenation of the $a$-prefix and the $a$-suffix popped from nonterminals in the rule for $X_i$.
In particular this means that each nonterminal in the rule for $X_i$ generate a part of this right-most occurrence of $a^\ell$.
If any of those nonterminals were removed during the block compression $a^\ell$ would be assigned an $X_i$-pre-power marking,
which is not the case.
So both those nonterminals remained in the rule.
Hence after popping prefixes and suffixes, between those two nonterminals there is exactly a block $a^\ell$, which is then
replaced by $a_\ell$, as promised, which ends the proof.
\qedhere
\end{proof}

Consider the $a^\ell$ and the `derived' substring $w^\ell$ of the \emph{input text}.
We show that if there are markings inside $w^\ell$, they are all inside the last among those $w$s.

\begin{clm}
\label{clm: existing marking}
Let $a^\ell$ be an occurrence of a maximal block to be replaced with $a_\ell$ which `generates' $w^\ell$ in the input text.
If there is any marking within this $w^\ell$ then it is within the last among those $w$s.
\end{clm}
\begin{proof}
Consider any pre-existing marking within $w^\ell$, say it was done when some $b^k$ was replaced by $b_k$.
As $b_k$ is a single letter and $a^\ell$ derives it, each $a$ derives at least one $b_k$.
The marking was done inside the string generated by the right-most $b_k$
(as we always put the marking within the rightmost occurrence of the string to be replaced).
Clearly the right-most $b_k$ is `derived' by the right-most $a$ within $a^\ell$,
sin in particular it is inside the right-most $w$ in this $w^\ell$.
So all markings within $w^\ell$ are in fact within the right-most $w$.
\qedhere
\end{proof}

We now demonstrate how to mark letters in the input text.
Suppose that we replace a power $a^\ell$,
let us consider the right-most occurrence of this $a^\ell$ in $\mytext$
and the smallest $X_i$ that generates this occurrence.
This $a^\ell$ generates some $w^\ell$ in the input text.
If there are no markings inside $w^\ell$ then we simply mark any $\ell$ letters within $w^\ell$.
In the other case, by Claim~\ref{clm: existing marking} we know that all those markings are in fact in the last $w$.
If any of them is the (unique) $X_i$-in marking, let us choose it. Otherwise choose any other marking.
Let $\ell'$ denote the length of the chosen marking.
Consider, whether this marking in $w$ is unique or not

\begin{description}
	\item[unique marking] Then we remove it and mark arbitrary $\ell \cdot \ell'$ letters in $w^\ell$;
this is possible, as $|w| \geq \ell'$ and so $|w^\ell| \geq \ell \cdot \ell'$.
Since $\log(\ell \cdot \ell') = \log \ell + \log \ell'$, the~\Mref{3} is preserved,
as it is enough to account for the $1 + \log \ell \leq c \log  \ell$ representation cost of $a^\ell$
as well as the $c \log \ell'$ cost associated with the previous marking of length $\ell'$.
	\item[not unique] Then $|w| \geq \ell' + 2$ (the $2$ for the other markings, see~\Mref{1}).
We remove the marking of length $\ell'$, let us calculate how many unmarked letters are in $w^\ell$ afterwards:
in $w^{\ell-1}$ there are at least $(\ell-1) \cdot (\ell'+2)$ letters (by the Claim~\ref{clm: existing marking}: none of them marked)
and in the last $w$ there are at least $\ell'$ unmarked letters (from the marking that we removed):
\begin{align*}
(\ell-1) \cdot (\ell'+2) + \ell'
	&=
(\ell \ell' + 2 \ell - \ell' - 2) + \ell'\\
	&=
\ell \ell' + 2 \ell - 2\\
	&>
\ell \ell'\enspace.
\end{align*}
We mark those $\ell \cdot \ell'$ letters, as in the previous case, the associated $c \log( \ell \ell')$
is enough to pay for the cost.
\end{description}

There is one issue: it might be that we created an $X_i$-in marking while there already was one, violating~\Mref{2}.
However, we show that if there were such a marking, it was within $w^\ell$ (and so within the last $w$, by Claim~\ref{clm: existing marking})
and so we could choose it as the marking that was deleted when the new one was created.
Consider the previous $X_i$-in marking.
It was introduced for some power $b^k$, replaced by $b_{k}$ that was a unique letter between the
nonterminals in the rule for $X_i$, by Claim~\ref{clm: markings}.
Consider the rightmost substring of the input text that is generated by the explicit letters
between nonterminals in the rule for $X_i$.
The operations performed on $G$ cannot shorten this substring, in fact they often expand it.
When $b_k$ is created, this substring is generated by $b_k$, by Claim~\ref{clm: markings}.
When $a_\ell$ is created, it is generated by $a_\ell$, by Claim~\ref{clm: markings}, i.e.\ this is exactly $w^\ell$.
So in particular $w^\ell$ includes the marking for $b_k$.

This shows that \Mrefall{} holds and so also the calculations in~\eqref{eq:estimations} hold,
in particular, the representation cos of powers is $\Ocomp(g \log (N/g))$.  
\end{proof}

Let $t_1$ and $t_2$ be the lengths of $|\mytext|$ at the beginning of two consecutive
phases, such that  $t_1 \geq g > t_2$.
By Lemma~\ref{lem:cost of compression} the cost of representing letters and the credit before the $|\mytext|$
was reduced to $t_2$ letters (as well as the credit remaining on the letters of grammar) is $\Ocomp(g + g \log(N/g))$.
So it is left to estimate what is the cost of representation in this phase.

\begin{lemma}
\label{lem:last phase estimations}
Consider a phase, such that at its beginning \mytext{} has length $t_1$
and after it it has length $t_2$, where $t_1 \geq g > t_2$.
Then the cost of representing letters introduced during this phase is at most $\Ocomp(g + g\log(N/g))$.
\end{lemma}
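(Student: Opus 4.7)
The plan is to apply the credit and marking accounting of Lemma~\ref{lem:cost of compression} one phase further, observing that phase $i+1$ adds only one phase to the total count and that the marking-based bound is a pointwise invariant of the execution. The representation cost incurred in phase $i+1$ decomposes into two parts: the cost paid by credit (the pair-compression letters and the non-power block-compression letters), and the cost of representing powers; I would bound each separately.

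For the credit-covered part, I would first observe that the credit sitting on $G$ at the start of phase $i+1$ is at most $\Ocomp(g+g\log(N/g))$. Indeed, each of the preceding phases began with $|\mytext|\ge g$, so by Lemma~\ref{lem:number of phases} their number is $\Ocomp(\log(N/g))$, and by Lemmata~\ref{lem:pc crossing} and~\ref{lem:blocksc} each issues at most $\Ocomp(m)\le\Ocomp(g)$ credit; adding the initial $2g$ credit yields the claim. Phase $i+1$ itself issues a further $\Ocomp(g)$ credit. Since released credit in a phase can only come from credit that was on $G$ (either at the start of the phase or issued during it), the pair-compression and non-power block-compression representation cost incurred in phase $i+1$ is at most this available credit, i.e.\ $\Ocomp(g+g\log(N/g))$.

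For the cost of representing powers introduced in phase $i+1$, I would reuse the marking scheme from the proof of Lemma~\ref{lem:cost of compression} unchanged. The invariants~\Mrefall{} are pointwise properties of the execution: at every moment the number of active markings is at most $2m\le 2g$ and the sum of marked lengths is at most $N$, regardless of whether $|\mytext|$ has already dropped below $g$. The concavity estimation~\eqref{eq:estimations} therefore bounds the total $G$-based representation cost of powers accumulated up to any moment, in particular up to and including phase $i+1$, by $\Ocomp(g+g\log(N/g))$, and Lemma~\ref{lem: transforming representations} transfers this bound to the \algmaini{}-based representation actually used. Since representation cost is monotone in time, the amount added in phase $i+1$ alone is no larger.

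Adding the two bounds gives the claimed $\Ocomp(g+g\log(N/g))$ for phase $i+1$. The point requiring care, and the reason phase $i+1$ merits a separate statement, is that $|\mytext|$ may collapse from a large $t_1\le N$ down to $t_2<g$ in a single phase, so one cannot bound the number of letters removed from $G$ during the phase by any pleasant quantity directly. The credit argument sidesteps this by bounding the non-power representation cost by the \emph{available} credit at the start of the phase plus the $\Ocomp(g)$ issued during it, both of which are controlled by counting issued credit over the $\Ocomp(\log(N/g))$ phases up to and including phase $i+1$.
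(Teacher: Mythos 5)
Your proof is correct, and its credit half coincides with the paper's own argument (available credit at the start of the phase is $\Ocomp(g+g\log(N/g))$ by the phase count, plus $\Ocomp(g)$ issued during the phase). Where you genuinely diverge is the block part: the paper does not re-run the marking machinery for this phase at all. Instead it argues directly about the algorithm's own representation via Lemma~\ref{lem: cost of powers}: since \mytext{} has length $t_2<g$ after the phase, at most $2t_2<2g$ letters replacing blocks can be introduced during it (each letter of the final text absorbs at most two of them in the subsequent pair compression), and the lengths of the replaced blocks sum to at most $t_1\le N$; the concavity computation of~\eqref{eq:estimations} then gives $\Ocomp(g\log(N/g))$ with no reference to $G$, to markings, or to the graph comparison of Lemma~\ref{lem: transforming representations}. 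Your route instead observes that the marking invariants \Mrefall{} in the proof of Lemma~\ref{lem:cost of compression} never use the hypothesis $|\mytext|\ge g$ --- only the phase count does --- so the cumulative bound extends through phase $i+1$ (one extra phase does not change the $\Ocomp(\log(N/g))$ count), and monotonicity bounds the increment; this is sound, with the minor bookkeeping point that Lemma~\ref{lem: transforming representations} compares the \emph{total} $G$-based cost (powers plus the credit-covered part) with the total \algmaini-based cost, so you need both your credit bound and your marking bound before transferring, exactly as in the proof of Lemma~\ref{lem:cost of compression}. The trade-off: your argument yields the slightly stronger conclusion that the cumulative representation cost through the end of phase $i+1$ is $\Ocomp(g+g\log(N/g))$ (which feeds directly into Theorem~\ref{thm: main imp} without the before/during split), at the price of re-inspecting the rather involved marking proof to confirm it is unconditional, whereas the paper's count of block letters via $t_2$ is an independent, elementary verification confined to the phase itself.
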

\begin{proof}
The cost of representing letters introduced during the pair compression is covered by the released credit, see Lemma~\ref{lem:pc crossing}.
There was at most $\Ocomp(g + g\log(N/g))$ credit in the grammar at the beginning of the phase, see Lemma~\ref{lem:cost of compression},
and during this phase at most $\Ocomp(g)$ credit was issued, see Lemma~\ref{lem:pc crossing} and Lemma~\ref{lem:blocksc}.

Consider the cost of representing blocks.
Note that since \mytext{} at the end of the phase has $t_2$ letters, at most
$2t_2$ letters representing blocks could be introduced in this phase
(since at most two blocks can be merged into one letter by pair compression afterwards).
Let $p_1$, \ldots, $p_k$ be the lengths of those powers.
Then the cost of representing them is proportional to (see Lemma~\ref{lem: cost of powers})
$$
k +	\sum_{i=1}^k \log p_i, \text{ where } k\leq 2t_2 \text{ and } \sum_{i=1}^k p_i \leq t_1\enspace .
$$
Since $k \leq 2t_2 < 2g$ we only estimate the sum.
Using the same analysis as in the case of~\eqref{eq:estimations} it can be concluded that this is at most
$$
	2 t_2 \log\left(\frac{t_1} {2t_2}\right) \leq 2 t_2 \log\left(\frac{N}{2t_2}\right) < 2 g \log\left(\frac{N} {2g}\right) 
	=\Ocomp\left(g \log\left(\frac{N}{g}\right)  \right) \enspace ,
$$
with the first equality following from $t_1 \leq N$ and the second from $g > t_2$
and monotonicity of $f(x) = x \log(N/x)$.
\end{proof}

Now the estimations from Lemma~\ref{lem:cost of compression} and 
Lemma~\ref{lem:last phase estimations} allow the proof of Theorem~\ref{thm: main imp}.

\begin{proof}[of Theorem~\ref{thm: main imp}]
The estimation of the running time is the same as in the case of \algmain, so it is omitted.

Concerning the size of the returned grammar,
consider the phase, such that before it the \mytext{} had length $t_1$ and right after it $t_2$,
where $t_1 \geq g > t_2$, there is such a phase as in the end the \mytext{} has length $1$.
Then by Lemma~\ref{lem:cost of compression} the cost of representing letters introduced before this phase is
$\Ocomp\left(g + g \log\left(\frac{N}{g} \right)\right)$ while by Lemma~\ref{lem:last phase estimations}
the cost of representing letters introduced in this phase is at most
$\Ocomp\left(g + g \log\left(\frac{N}{g} \right)\right)$.
Hence the size of the grammar that is calculated by \algmaini{}
after this phase is at most $\Ocomp\left(g + g \log\left(\frac{N}{g} \right)\right)$.
So also the minimum found during the computation is of at most this size.
\end{proof}

\subsubsection*{Acknowledgements}
I would like to thank Pawe\l{} Gawrychowski for introducing me to the topic,
for pointing out the relevant literature~\cite{MehlhornSU97}
and discussions; Markus Lohrey for suggesting the topic of this paper
and bringing the idea of applying the recompression to the smallest grammar.

\appendix

\section{Sakamoto's algorithm~\cite{SLPaproxSakamoto}}
In proof that bounds the number of introduced nonterminals~\cite[Theorem 2]{SLPaproxSakamoto},
it is first estimated that in one execution of the while loop for a factor $f_i$ the introduced nonterminals
occur in $f_1f_2\cdots f_{i-1}$, except perhaps a constant number of them.
This argument follows from observation that $f_i$ is compressed to $\alpha\beta\gamma$,
where $|\alpha|$ and $|\gamma|$ are bounded by a constant and the earlier occurrence of the same string as $f_i$
is compressed to $\alpha'\beta\gamma'$ (where also $|\alpha'|$ and $|\gamma'|$ are bounded by a constant).
This is true, however, when $\alpha$ and $\gamma$ represent nonterminals introduced by \textit{repetition}
procedure (i.e.\ they are blocks in the terminology used here)
we need to take into the account also the additional nonterminals that are introduced for representation of those blocks.
The estimation of $\Ocomp(1)$ is not enough,
as in the worst case $\Omega(\log N)$ are needed to represent a single block of $a$s.
We do not see any easy patch to repair this flaw.

The improved analysis~\cite[Theorem 2]{SLPaproxSakamoto}, in which the number of nonterminals is bounded
by $\Ocomp\left(g + \log\left(\frac{N}{g}\right)\right)$, has the same shortcoming.
\end{document}